\title{Suffix sorting via matching statistics} %TODO Please add
\author{Zsuzsanna Lipt{\'a}k}{Department of Computer Science, University of Verona, Italy}{zsuzsanna.liptak@univr.it}{https://orcid.org/0000-0002-3233-0691}{}
\author{Francesco Masillo}{Department of Computer Science, University of Verona, Italy}{francesco.masillo@univr.it}{https://orcid.org/0000-0002-2078-6835}{}
\author{Simon J.\ Puglisi}{Helsinki Institute for Information Technology (HIIT) \and Department of Computer Science, University of Helsinki, Finland}{simon.puglisi@helsinki.fi}{https://orcid.org/0000-0001-7668-7636}{}
\authorrunning{Zs. Lipt\'ak, F. Masillo and S.J. Puglisi} %TODO mandatory. First: Use abbreviated first/middle names. Second (only in severe cases): Use first author plus 'et al.'
\keywords{Generalized suffix array, matching statistics, string collections, compressed representation, data structures, efficient algorithms} 
\tikzset{near start abs/.style={xshift=1cm},
	node/.style={circle,draw},
	nodeone/.style={circle,draw,gray, line width=0.7mm},
	nodered/.style={circle,draw,red, line width=0.7mm}}
\newcommand{\BWT}{\textit{BWT}}
\newcommand{\Oh}{{\cal O}}
\newcommand{\SA}{\textit{SA}}
\newcommand{\GSA}{\textit{GSA}}
\newcommand{\ISA}{\textit{ISA}}
\newcommand{\LCP}{\textit{LCP}}
\newcommand{\PLCP}{\textit{PLCP}}
\newcommand{\MS}{\textit{MS}}
\newcommand{\PSV}{\textit{PSV}}
\newcommand{\NSV}{\textit{NSV}}
\newcommand{\Y}{\textit{Y}}
\newcommand{\Yc}{\textit{Yc}}
\newcommand{\cY}{\textit{cY}}
\newcommand{\pred}{\textit{pred}}
\newcommand{\LCPsum}{\textit{LCPsum}}
\newcommand{\CMS}{\textit{CMS}}
\newcommand{\eCMS}{\textit{eCMS}}
\newcommand{\lcp}{\textit{lcp}}
\newcommand{\ems}{\textit{ems}}
\newcommand{\ihead}{\textit{i-head}}
\newcommand{\sacamats}{{\tt sacamats}}
\newcommand{\sais}{{\tt sais}}
\newcommand{\divsufsort}{{\tt divsufsort}}
\newcommand{\gsufsort}{{\tt gsufsort}}
\newcommand{\bigBWT}{{\tt bigBWT}}
\definecolor{darkred}{RGB}{128,0,0}
\definecolor{darkgreen}{RGB}{0,128,0}
\definecolor{lightgreen}{RGB}{224,255,224}
\definecolor{darkblue}{RGB}{0,0,128}
\definecolor{lightblue}{RGB}{224,244,255}
\renewcommand{\epsilon}{\varepsilon}
\begin{document}

\maketitle

\bigskip \bigskip

\begin{center} 
{\em Accepted at the Workshop on Algorithms in Bioinformatics (WABI 2022), Sept. 5-9, 2022, Potsdam, Germany}
\end{center}

\bigskip

\begin{abstract}
    We introduce a new algorithm for constructing the generalized suffix array of a collection of highly similar strings. As a first step, we construct a compressed representation of the matching statistics of the collection with respect to a reference string. We then use this data structure to distribute suffixes into a partial order, and subsequently to speed up suffix comparisons to complete the generalized suffix array. Our experimental evidence with a prototype implementation (a tool we call {\tt sacamats}) shows that on string collections with highly similar strings we can construct the suffix array in time competitive with or faster than the fastest available methods. Along the way, we describe a heuristic for fast computation of the matching statistics of two strings, which may be of independent interest.
\end{abstract}

\newpage

%%%%%%%%%%%%%%%%%%%%%%%%%%%%%%%%%%%%%%%%%%%%%%%%%%%%%%%%%%%%%%%%%%%%%%%%%%%%%%%%%%%%%%%
%% SECTION 
%%%%%%%%%%%%%%%%%%%%%%%%%%%%%%%%%%%%%%%%%%%%%%%%%%%%%%%%%%%%%%%%%%%%%%%%%%%%%%%%%%%%%%%

\section{Introduction}\label{sec:introduction}

Suffix sorting---the process of ordering all the suffixes of a string into lexicographical order---is the key step in construction of suffix arrays and the Burrows-Wheeler transform, two of the most important structures in text indexing and biological sequence analysis~\cite{OhlebuschBook,MBCT15,AKO04}. As such, algorithms for efficient suffix sorting have been the focus of intense research since the early 1990s~\cite{MM93,PST07}.

With the rise of pangenomics, there is an increased demand for indexes that support fast pattern matching over collections of genomes of individuals of the same species (see, e.g.,~\cite{GNP20,PZ20,ROBLGB22}). With pangenomic collections constantly growing and changing, construction of these indexes---and in particular suffix sorting---is a computational bottleneck in many bioinformatics pipelines. While traditional and well-established suffix sorting tools such as {\tt divsufsort}~\cite{divsufsort,0001K17} and {\tt sais}~\cite{sais-lite,NZC11} can be applied to these collections, specialised algorithms for collections of similar sequences, perhaps most notably the so-called {\tt BigBWT} program~\cite{BoucherGKLMM19}, are beginning to emerge.

In this paper we describe a suffix sorting algorithm specifically targeted to collections of highly similar genomes that makes use of the {\em matching statistics}, a data structure due to Chang and Lawler, originally used in the context of approximate pattern matching~\cite{CL94}. The core device in our suffix sorting algorithm is a novel compressed representation of the matching statistics of every genome in the collection with respect to a designated reference genome, that allows determining the relative order of two arbitrary suffixes, from any of the genomes, efficiently. We use this data structure to drive a suffix sorting algorithm that has a small working set relative to the size of the whole collection, with the aim of increasing locality of memory reference. Experimental results with a prototype implementation show the new approach to be faster or competitive with state-of-the-art methods for suffix array construction, including those targeted at highly repetitive data. We also provide a fast, practical algorithm for matching statistics computation, which is of independent interest. 

The remainder of this paper is structured as follows. The next section sets notation and defines basic concepts. In Section~\ref{sec:ms} we describe a compressed representation of the matching statistics and a fast algorithm for constructing it. Section~\ref{sec:comparing} then describes how to use the compressed matching statistics to determine the relative lexicographic order of two arbitrary suffixes of the collection. Section~\ref{sec:putting}  describes a complete suffix sorting algorithm. We touch on several implementation details in Section~\ref{sec:implementation}, before describing experimental results in Section~\ref{sec:experiments}. Reflections and avenues for future work are then offered.

%%%%%%%%%%%%%%%%%%%%%%%%%%%%%%%%%%%%%%%%%%%%%%%%%%%%%%%%%%%%%%%%%%%%%%%%%%%%%%%%%%%%%%%
%% SECTION 
%%%%%%%%%%%%%%%%%%%%%%%%%%%%%%%%%%%%%%%%%%%%%%%%%%%%%%%%%%%%%%%%%%%%%%%%%%%%%%%%%%%%%%%

\section{Basics}\label{sec:basics}

A string $T$ over an ordered alphabet $\Sigma$, of size $|\Sigma| = \sigma$, is a finite sequence $T=T[1..n]$ of characters from $\Sigma$. We use the notation $T[i]$ for the $i$th character of $T$, $|T|$ for its length $n$, and $T[i..j]$ for the substring $T[i]\cdots T[j]$; if $i>j$ then $T[i..j]=\epsilon$, where $\epsilon$ is the empty string. The substring (or factor) $T[i..]=T[i..n]$ is called the $i$th suffix, and $T[..i]=T[1..i]$ the $i$th prefix of $T$. We assume throughout that the last character of each string is a special character $\$$, not occurring elsewhere in $T$, which is set to be smaller than every character in $\Sigma.$ 

Given a string $T$, the {\em suffix array} $\SA$ is a permutation of the index set $\{1,\ldots,n\}$ defined by: $\SA[i]=j$ if the $j$th suffix of $T$ is the $i$th in lexicographic order among all suffixes of $T$. The {\em inverse suffix array} $\ISA$ is the inverse permutation of $\SA$. The {\em \LCP-array} is given by: $\LCP[1]=0$, and for $i\geq 2$, $\LCP[i]$ is the length of the longest common prefix (lcp) of the two suffixes $T[\SA[i-1]..]$ and $T[\SA[i]..]$ (which are consecutive in lexicographic order).  A variant of the $\LCP$ array is the {\em permuted $\LCP$-array}, $\PLCP$, defined as $\PLCP[i] = \LCP[\ISA[i]]$, i.e.\ the lcp values are stored in text order, rather than in $\SA$ order. We further define $\LCPsum(T)=\sum_{i=1}^{|T|} \LCP[i]$. $\LCPsum$ can be used as a measure of repetitiveness of strings, since the number of distinct substrings of $T$ equals ${(|T|^2+|T|)}/{2} - \LCPsum(T)$. All these arrays can be computed in linear time in $|T|$, see e.g.~\cite{NZC11,KMP09}.

Given the suffix array $\SA$ of $T$ and a substring $U$ of $T$, the indices of all suffixes which have $U$ as prefix appear consecutively in $\SA$. We refer to this interval as {\em $U$-interval}: the $U$-interval is $\SA[s..e]$, where $\{ \SA[s],\SA[s+1],\ldots,\SA[e-1],\SA[e]\}$ are the starting positions of the occurrences of $U$ in $T$.

Let ${\cal C} = \{S_1,\ldots,S_m\}$ be a collection of strings (a set or multiset). The {\em generalized suffix array} $\GSA$ of ${\cal C}$ is defined as $\GSA[i]=(d,j)$ if $S_d[j..]$ is the $i$th suffix in lexicographic order among all suffixes of the strings from ${\cal C}$, where ties are broken by the document index $d$. The \GSA\ can be computed in time $\Oh(N)$, where $N$ is the total length of strings in ${\cal C}$~\cite{OhlebuschBook}.

Let $R$ and $S$ be two strings. The {\em matching statistics of $S$ with respect to $R$} is an array $\MS$ of length $|S|$, defined as follows. Let $U$ be the longest prefix of suffix $S[i..]$ which occurs in $R$ as a substring, where the end-of-string character $\#$ of $R$ is assumed to be different from, and smaller than that of $S$. Then $\MS[i] = (p_i,\ell_i)$, where $p_i=-1$ if $U=\epsilon$, and $p_i$ is an occurrence of $U$ in $R$ otherwise, and $\ell_i = |U|$. (Note that $p_i$ is not unique in general.) We refer to $U$ as the {\em matching factor}, and to the character $c$ immediately following $U$ in $S$ as the {\em mismatch character}, of position $i$.  For a collection ${\cal C} = \{S_1,\ldots,S_m\}$ and a string $R$, the matching statistics of ${\cal C}$ w.r.t.\ $R$ is simply the concatenation of $\MS_i$'s, where $\MS_i$ is the matching statistics of $S_i$ w.r.t.\ $R$. We will discuss matching statistics in more detail in Section~\ref{sec:ms}. 

For an integer array $A$ of length $n$ and an index $i$, the previous and next smaller values, $\PSV$ resp.\ $\NSV$, are defined as $\PSV(A, i) = \max\{i' < i : A[i'] < A[i]\}$ resp.\  $\NSV(A, i) = \min\{i' > i : A[i'] < A[i]\}$. Note that $\PSV$ resp.\ $\NSV$ is not defined for $i = 1$ resp. $i = n$. In $O(n)$ preprocessing of $A$, a data structure of size $n \log_2(3 + 2\sqrt{2}) + o(n)$ bits can be built that supports answering arbitrary \PSV\ and \NSV\ queries in constant time per query~\cite{F11}. 

Let $X$ be a finite set of integers. Given an integer $x$, the predecessor of $x$, $\pred(x)$ is defined as the largest element smaller than $x$, i.e. $\pred_X(x) = \max\{y\in X \mid y \leq x\}$. 
Using the y-fast trie data structure of Willard~\cite{W83} allows answering predecessor queries in $O(\log\log |X|)$ time using $O(|X|)$ space.

We are now ready to state our problem: 

\begin{quote}
    {\bf Problem Statement:} Given a string collection ${\cal C} = \{ S_1,\ldots, S_m\}$ and a reference string $R$, compute the generalized suffix array $\GSA$ of ${\cal C}$. 
\end{quote}

We will denote the length of $R$ by $n$ and the total length of strings in the collection by $N=\sum_{d=1}^m |S_d|$. As before, we assume that the end-of-string character $\#$ of $R$ is strictly smaller than those of the strings in the collection ${\cal C}$. 
We are interested in those cases where $\LCPsum_R$ is small and the strings in ${\cal C}$ are very similar to $R$. If no reference string is given in input, we will take $S_1$ to be the reference string by default. 

%%%%%%%%%%%%%%%%%%%%%%%%%%%%%%%%%%%%%%%%%%%%%%%%%%%%%%%%%%%%%%%%%%%%%%%%%%%%%%%%%%%%%%%
%% SECTION 
%%%%%%%%%%%%%%%%%%%%%%%%%%%%%%%%%%%%%%%%%%%%%%%%%%%%%%%%%%%%%%%%%%%%%%%%%%%%%%%%%%%%%%%

\subsection{Efficient suffix array construction}\label{sec:SACA}

Currently, the best known and conceptually simplest linear-time suffix array construction algorithm is the SAIS algorithm by Nong et al.~\cite{NZC11}. It cleverly combines, and further develops, several ideas used by previous suffix array construction algorithms, among these {\em induced sorting}, and use of a so-called {\em type array}, already used in~\cite{IT99,KoA05} (see also~\cite{PST07}).

Nong et al.'s approach can be summarized as follows: assign a type to each suffix, sort a specific subset of suffixes, and compute the complete suffix array by inducing the order of the remaining suffixes from the sorted subset. There are three types of suffixes, one of which  constitutes the subset to be sorted first. 

The definition of types is as follows (originally from~\cite{KoA05}, extended in~\cite{NZC11}): Suffix $i$ is {\em $S$-type} (smaller) if $T[i..] < T[i+1..]$, and {\em $L$-type} (larger) if $T[i..] > T[i+1..]$. An $S$-type suffix is {\em $S^*$-type} if $T[i..]$ is $S$-type and $T[i-1..]$ is $L$-type. It is well known that assigning a type to each suffix can be done with a back-to-front scan of the text in linear time.

Now, if the relative order of the $S^*$-suffixes is known, then that of the remaining suffixes can be induced with two linear scans over the partially filled-in suffix array: the first scan to induce $L$-type suffixes, and the second to induce $S$-type suffixes. For details, see~\cite{NZC11} or~\cite{OhlebuschBook}. 

Another ingredient of SAIS, and of several other suffix array construction algorithms, is what we term the {\em metacharacter method}. Subdivide the string $T$ into overlapping substrings, show that if two suffixes start with the same substring, then their relative order depends only on the remaining part; assign metacharacters to these substrings according to their rank (w.r.t.\ the lexicographic order, or some other order, depending on the algorithm), and define a new string on these metacharacters. Then the relative order of the suffixes of the new string and the corresponding suffixes starting with these specific substrings will coincide. In SAIS~\cite{NZC11}, so-called LMS-substrings are used, while a similar method is applied in prefix-free-parsing (PFP)~\cite{BoucherGKLMM19}. Here we will apply this method using substrings starting in special positions which we term insert-heads, see Sections~\ref{sec:comparing} and~\ref{sec:putting} for details.

%%%%%%%%%%%%%%%%%%%%%%%%%%%%%%%%%%%%%%%%%%%%%%%%%%%%%%%%%%%%%%%%%%%%%%%%%%%%%%%%%%%%%%%
%% SECTION 
%%%%%%%%%%%%%%%%%%%%%%%%%%%%%%%%%%%%%%%%%%%%%%%%%%%%%%%%%%%%%%%%%%%%%%%%%%%%%%%%%%%%%%%

\section{Compressed matching statistics}\label{sec:ms}

Let $R,S$ be two strings over $\Sigma$ and $\MS$ be the matching statistics of $S$ w.r.t.\ $R$. Let $\MS[i] = (p_i,\ell_i)$. It is a well known fact that if $\ell_i>0$, then $\ell_{i+1} \geq \ell_i-1$. This can be seen as follows. Let $U$ be the matching factor of position $i$, and $p_i$ an occurrence of $U$ in $R$. Then $U' = U[2..\ell_i]$ is a prefix of $S[i+1..]$ of length $\ell_i-1$, which occurs in position $p_i+1$ of $R$. 

Let us call a position $j$ a {\em head} if $\ell_j > \ell_{j-1}-1$, and a sequence of the form $(x,x-1,x-2,\ldots)$, of length at most $x-1$, a {\em decrement run}, i.e.\ each element is one less than the previous one. Using this terminology, we thus have that the sequence $L=(\ell_1,\ell_2,\ldots, \ell_n)$ is a concatenation of decrement runs, i.e.\ $L$ has the form $(x_1,x_1-1,x_1-2,\ldots, x_2,x_2-1,x_2-2,\ldots,\ldots,x_k,x_k-1,x_k-2,\ldots)$, with each $x_j=\ell_j$ for some head $j$. We can therefore store the matching statistics in compressed form as follows: 

\begin{definition}[Compressed matching statistics]
Let $R,S$ be two strings over $\Sigma$, and $\MS$ be the matching statistics of $S$ w.r.t.\ $R$. The {\em compressed matching statistics (\CMS) of $S$ w.r.t.\ $R$} is a data structure storing $(j,\MS[j])$ for each head $j$, and 
a predecessor data structure on the set of heads $H$. 
\end{definition}

We can use $\CMS$ to recover all values of $\MS$: 

\begin{lemma}\label{lemma:CMS}
Let $1\leq i\leq |S|$. Then $\MS[i] = (p_j+k,\ell_j-k)$, where $j = \pred_H(i)$ and $k = i-j$. 
\end{lemma}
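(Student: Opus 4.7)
The plan is to establish the two coordinates of $\MS[i]$ separately, both leveraging the fact that, between consecutive heads, the sequence of lengths decrements by exactly one at each step.

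First I would derive the length identity $\ell_i = \ell_j - k$. By definition of $j = \pred_H(i)$, every position $m$ with $j < m \leq i$ fails the head condition, so $\ell_m \leq \ell_{m-1} - 1$. Combined with the lower bound $\ell_m \geq \ell_{m-1} - 1$ recalled at the start of the section (valid whenever the previous length is positive), this forces $\ell_m = \ell_{m-1} - 1$ for each such $m$. A trivial induction on $m$ then gives $\ell_i = \ell_j - (i - j) = \ell_j - k$.

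Next I would verify that $p_j + k$ is a valid occurrence position of the matching factor at $i$. Let $U$ be the matching factor at $j$, so $U$ has length $\ell_j$, is a prefix of $S[j..]$, and occurs at $p_j$ in $R$. Then $U[k+1\,..\,\ell_j]$ is a substring of $R$ of length $\ell_j - k$ starting at $p_j + k$, and since $U$ prefixes $S[j..]$, this shorter string prefixes $S[j+k..] = S[i..]$. By the length identity already proved, its length equals $\ell_i$, so it coincides with the matching factor at $i$; hence $p_j + k$ is a legitimate choice for $p_i$, which is all that is required because $p_i$ is non-unique in general.

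The only subtlety, rather than a real obstacle, is the degenerate case $\ell_j = 0$: here the lower-bound argument above does not apply, but then $\ell_{j+1} \geq 0 > \ell_j - 1$ trivially holds, so $j+1$ is itself a head. Consequently $\pred_H(i) = j$ forces $i = j$, i.e.\ $k = 0$, and the formula $(p_j + k, \ell_j - k) = (p_j, \ell_j)$ is just the stored entry. The case $k = 0$ for arbitrary $j \in H$ is handled in the same way, so the proof reduces to the two short arguments sketched above.
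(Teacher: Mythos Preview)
Your proof is correct and follows essentially the same approach as the paper's: both establish $\ell_i = \ell_j - k$ via the decrement-run structure between consecutive heads (you do it by explicit step-by-step induction, the paper by a direct lower bound plus a contradiction argument that amounts to the same thing), and both verify $p_j + k$ as a valid occurrence by restricting the matching factor $U$ at $j$ to its suffix $U[k+1..\ell_j]$. Your treatment is slightly more careful about the degenerate $\ell_j = 0$ case, which the paper absorbs into a $\max(0,\cdot)$ without comment.
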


\begin{proof}
Let $\ell_i$ be the length of the matching factor of $i$. Since there is a matching factor of length $\ell_j$ starting in position $j$ in $S$, this implies that $\ell_i \geq \max(0,\ell_j-k)$. If $\ell_i$ was strictly greater than $\ell_j-k$, this would imply the presence of another head between $j$ and $i$, in contradiction to $j = \pred_H(i)$. Since an occurrence of the matching factor $U_j$ of $j$ starts in position $p_j$ of $R$, therefore the matching factor $U'=U[k+1..\ell_j]$ of $i$ has an occurrence at position $p_j+k$. 
\end{proof}

\begin{figure}[h!]
    \centering
    \begin{tabular}{|r|r|r|r|r|r|r|r|r|r|r|r|r|r|r|r|r|r|r}
        \hline
        $i$ & 1 & 2 & 3 & 4 & 5 & 6 & 7 & 8 & 9 & 10 & 11 & 12 & 13 & 14 & 15 & 16 & 17 \\   
        \hline
        $R$ & {\tt T} & {\tt G} & {\tt A} & {\tt T} & {\tt G} & {\tt G} & {\tt C} & {\tt A} & {\tt C} & {\tt A} & {\tt G} & {\tt A} & {\tt T} & {\tt A} & {\tt C} & {\tt T} & {\tt \#}  \\
        \hline
        \hline
        $S$ & {\tt G} & {\tt A} & {\tt T} & {\tt G} & {\tt G} & {\tt C} & {\tt A} & {\tt C} & {\tt A} & {\tt T} & {\tt T} & {\tt G} & {\tt A} & {\tt T} & {\tt G} & {\tt G} & {\tt \$} \\
        $p_i$ & 2 & 3 & 4 & 5 & 6 & 7 & 8 & 9 & 12 & 13 & 1 & 2 & 3 & 4 & 5 & 6 & -1\\
        $\ell_i$ & 9 & 8 & 7 & 6 & 5 & 4 & 3 & 2 & 2 & 1 & 6 & 5 & 4 & 3 & 2 & 1 & 0 \\ 
        head & \checkmark & & & & & & & & \checkmark & & \checkmark & & & & & & \\
        \hline
        $q_i$ & 2 & 3 & 4 & 5 & 6 & 7 & 8 & 9 & 3 & 4 & 1 & 2 & 3 & 4 & 5 & 11 & 17\\
        i-head & \checkmark & & & & & & & & \checkmark & & \checkmark & & & & & \checkmark & \checkmark \\
        \hline
    \end{tabular}
    \caption{Example for the matching statistics and the data for the \CMS\ and the \eCMS. In the first two rows, we give $\MS$ of $S$ w.r.t.\ $R$, where $\MS[i]=(p_i,\ell_i)$. In row 3, we mark the heads (for the \CMS). In rows 4, we give the position $q_i$, defined by $ip(i)$, i.e.\ $q_i=\SA_R[ip(i)]$, where $ip(i)$ is the insert-point of suffix $S[i..]$ in the suffix array of $R$. In row $5$, we mark the insert-heads (for the \eCMS).}
    \label{fig:example_eCMS}
\end{figure}

\begin{figure}[ht]
    \centering

    \begin{tabular}{|r|r|r|l|}
        \hline
        & $i$ & $\SA_R$ & $R[{\SA_R[i]}..]$\\
        \hline
        & 1 & 17 & {\tt \#} \\ 
        & 2 & 8 & {\tt ACAGATACT\#} \\
        & 3 & 14 & {\tt ACT\#} \\
        & 4 & 10 & {\tt AGATACT\#} \\ 
        & 5 & 12 & {\tt \textcolor{blue}{AT}ACT\#} \\
      $\rightarrow$  & 6 & 3 & {\tt \textcolor{blue}{AT}GGCACAGATACT\#} \\ 
        & 7 & 7 & {\tt CACAGATACT\#} \\ 
        & 8 & 9 & {\tt CAGATACT\#} \\
        & 9 & 15 & {\tt CT\#} \\
        & 10 & 11 & {\tt GATACT\#} \\
        & 11 & 2 & {\tt GATGGCACAGATACT\#} \\ 
        & 12 & 6 & {\tt GCACAGATACT\#} \\ 
        & 13 & 5 & {\tt GGCACAGATACT\#} \\ 
        & 14 & 16 & {\tt T\#} \\
        & 15 & 13 & {\tt TACT\#} \\ 
        $\rightarrow$ & 16 & 1 & {\tt \textcolor{blue}{TGATGG}CACAGATACT\#} \\
        & 17 & 4 & {\tt TGGCACAGATACT\#} \\
        \hline
    \end{tabular}
    
    \caption{Details of computation of the matching statistics from Figure \ref{fig:example_eCMS}. We highlight in blue the matching factors for the indices $i=9$ (matching factor ${\tt AT}$, mismatch character ${\tt T}$) and $11$ (matching factor ${\tt TGATGG}$, mismatch character ${\tt \$}$). The arrows represent the insert-points.} 
    \label{fig:insertion-point}
\end{figure}

\begin{example}\label{ex:1} Consider the reference $R={\tt TGATGGCACAGATACT}$ and $S=$ ${\tt GATGGCACATTGATGG}$. 
The \CMS\ of $S$ w.r.t.\ $R$ is: $(1,2,9), (9,12,2), (11,1,6)$, see Figure~\ref{fig:example_eCMS}. 
\end{example}

From Lemma~\ref{lemma:CMS} and the properties of the predecessor data structure on the set of heads we get: 

\begin{proposition}\label{prop:CMS}
Let $R,S$ be two strings over $\Sigma$. We can store the matching statistics of $S$ w.r.t.\ $R$ in $\Oh(\chi)$ space such that any entry $\MS[i]$, for $1\leq i \leq |S|$, can be accessed in $\Oh(\log\log \chi)$ time, where $\chi=|H|$ is the number of heads. 
\end{proposition}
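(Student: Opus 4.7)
The plan is to read the proposition as a direct packaging of Lemma~\ref{lemma:CMS} together with the standard bounds for predecessor data structures on a small universe, so the proof reduces to checking that the pieces fit together with the claimed resources.

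First I would specify the concrete layout. Store the heads as an array $H[1..\chi]$ in increasing order, together with a companion array $M[1..\chi]$ where $M[r]=(p_{H[r]},\ell_{H[r]})$; this accounts for $\Oh(\chi)$ words of space. On top of the key set $H$ build Willard's y-fast trie, which by the result cited in Section~\ref{sec:basics} uses $\Oh(\chi)$ space and answers $\pred_H$ queries in $\Oh(\log\log\chi)$ time. I would have the trie return not only the predecessor value $j=H[r]$ but also its rank $r$, so that the corresponding pair $(p_j,\ell_j)$ can be fetched from $M[r]$ in constant time; this is a routine augmentation of the y-fast trie (attach a satellite pointer to each leaf), so there is no additional cost beyond constant factors in space and time.

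Given this layout, the access algorithm for $\MS[i]$ is immediate: query the y-fast trie to obtain $j=\pred_H(i)$ and its rank $r$ in $\Oh(\log\log\chi)$ time, read $(p_j,\ell_j)=M[r]$ in $\Oh(1)$, set $k=i-j$, and return $(p_j+k,\ell_j-k)$. By Lemma~\ref{lemma:CMS} this equals $\MS[i]$, so correctness is handed to us. Summing the costs, the query time is $\Oh(\log\log\chi)+\Oh(1)=\Oh(\log\log\chi)$ and the total space is $\Oh(\chi)$ words, which matches the claimed bounds.

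There is no real obstacle here; the only thing to double-check is the corner case $i<H[1]$, where $\pred_H(i)$ is undefined. In that case the prefix $S[1..i]$ lies before the first head, so the matching-factor length at position $i$ is zero by the definition of heads, and we can return $(-1,0)$ directly (detectable by the y-fast trie reporting no predecessor, also in $\Oh(\log\log\chi)$). With this minor case handled, the proof is a straightforward combination of Lemma~\ref{lemma:CMS} and Willard's bound.
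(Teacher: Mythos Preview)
Your proposal is correct and follows exactly the approach the paper intends: the paper does not give a separate proof but simply states that the proposition follows ``from Lemma~\ref{lemma:CMS} and the properties of the predecessor data structure on the set of heads,'' which is precisely the combination you spell out. One small remark: the corner case $i<H[1]$ cannot arise, since position~$1$ is always a head (there is no $\ell_0$ to compare against, and the paper's running example marks it as such), so $\pred_H(i)$ is defined for every $1\le i\le |S|$.
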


For some statistics on the number $\chi$ of heads, see the end of Sec.~\ref{sec:enhanced_cms}. 

%%%%%%%%%%%%%%%%%%%%%%%%%%%%%%%%%%%%%%%%%%%%%%%%%%%%%%%%%%%%%%%%%%%%%%%%%%%%%%%%%%%%%%%
%% SECTION 
%%%%%%%%%%%%%%%%%%%%%%%%%%%%%%%%%%%%%%%%%%%%%%%%%%%%%%%%%%%%%%%%%%%%%%%%%%%%%%%%%%%%%%%

\subsection{Enhancing the CMS}\label{sec:enhanced_cms}

Let $R,S$ be two strings over $\Sigma$, and $\MS$ the matching statistics of $S$ w.r.t.\ $R$. We now assume that all characters that occur in $S$ also occur in $R$ (see Sec.~\ref{sec:implementation}). Let $\SA_R$ be the suffix array of $R$. For position $i$ of $S$, let $U\neq \epsilon$ be the matching factor and $c$ the mismatch character of $i$. We want to compute the position that the suffix $S[i..]$ would have in $\SA_R$ if it was present.  To this end, we define the {\em insert point} of $i$, $ip(i)$,  as follows:  
\begin{align*} 
ip(i) = 
\begin{cases} 1 & \text{ if $U=\epsilon$},\\  
\max\{ j \mid U \text{ occurs in } \SA_R[j] \text{ and } R[\SA_R[j]..] < Uc\} & \text{ if this set is non-empty,}\\
 \min\{j \mid U \text{ occurs in } \SA_R[j]\} & \text{ otherwise.} 
 \end{cases} 
\end{align*}

In other words, the insert point is the lexicographic rank, among all suffixes of $R$, of the next smaller occurrence of $U$ in $R$ if such an occurrence exists, and of the smallest occurrence of $U$ in $R$ otherwise. Note that case 1 (where $U=\epsilon$) only happens for end-of-string characters. The insert point is well-defined for every $i$ because $\#$ is smaller than all other characters, including other end-of-string characters. Observe that the insert point of $i$ always lies within the $U$-interval of $\SA_R$. For an example, see Fig.~\ref{fig:insertion-point}. 

We will later use the insert points to bucket suffixes. First we need to slightly change the definition of our compressed matching statistics. We will add more information to the heads: we add the mismatch character and replace the position entry $p_i$, which gives just some occurrence of the matching factor, by the specific occurrence $q_i$ given by the insert point. This will imply adding more heads, so our data structure may increase in size. 

To this end, we define $j$ to be an {\em insert-head} if $\SA_R[ip(j)] \neq \SA_R[ip(j-1)]+1$.  Note that, in particular, all heads are also insert-heads, but it is possible to have insert-heads $j$ which are not heads, namely where $\ell_j = \ell_{j-1}-1$. 

\begin{definition}[Enhanced compressed matching statistics]
Let $R,S$ be two strings over $\Sigma$. Define the {\em enhanced matching statistics} as follows: for each $1\leq i \leq |S|$, let $\ems(i) = (q_i,\ell_i,x_i,c_i)$, where $q_i = \SA_R[ip(i)]$, $\ell_i$ is the length of the matching factor $U$ of $i$, $c_i$ is the mismatch character, and $x_i\in \{S,L\}$ indicates whether $Uc_i$ is smaller (S) or greater (L) than $R[q_i..]$. 
The {\em enhanced compressed matching statistics (\eCMS) of $S$ w.r.t.\ $R$} is a data structure storing $(j,\ems(j))$  for each insert-head $j$, and a predecessor data structure on the set of insert-heads $H'$. 
\end{definition}

\begin{example}\label{ex:2} Continuing with Example~\ref{ex:1}, the enhanced \CMS\ 
of $S$ w.r.t.\ $R$ is: 
$(1,2,9,L,{\tt T})$, $(9,3,2,L,{\tt T})$, $(11,1,6,S,{\tt \$}),$ $(16,11,1,S,{\tt \$})$, $(17,17,0,L,{\tt \$})$, see Figure~\ref{fig:example_eCMS}. 
\end{example}

We will need some properties of the insert point in the following: 

\begin{observation}\label{obs:ip}
Let $ip(i)$ be the insert point of $i$, and $\ems(i) = (q_i,\ell_i,x_i,c_i)$. 

\begin{enumerate}
    \item $ip(i) = ip(i')$ if and only if $q_i = q_{i'}$, 
    \item if $x_i=S$ then $R[\SA_R[ip(i)-1]..] < S[i..] < R[\SA_R[ip(i)]..] = R[q_i..]$, 
    \item if $x_i=L$ then $R[q_i..] = R[\SA_R[ip(i)]..] < S[i..] < R[\SA_R[ip(i)+1]..]$. 
\end{enumerate}

\end{observation}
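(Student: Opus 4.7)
For part (1), I would simply note that $\SA_R$ is a permutation of $\{1,\ldots,|R|\}$ and therefore injective. Since $q_i = \SA_R[ip(i)]$ by definition, the equivalence $q_i = q_{i'} \iff ip(i) = ip(i')$ is immediate.

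For parts (2) and (3), the plan is first to classify which clause of the definition of $ip(i)$ is active in terms of $x_i$. The crucial observation is that $Uc_i$ does not occur in $R$, because $U$ is by definition the longest prefix of $S[i..]$ that does. Hence every suffix in the $U$-interval of $\SA_R$ has its $(|U|+1)$-th character strictly different from $c_i$, so the $U$-interval splits into an initial block of suffixes lexicographically smaller than $Uc_i$ and a final block of suffixes lexicographically larger than $Uc_i$. I would argue that the ``max'' clause in the definition of $ip(i)$ is active precisely when the smaller block is non-empty; in that case $R[\SA_R[ip(i)]..] < Uc_i$ and hence $x_i = L$, while otherwise $ip(i)$ is the first index of the $U$-interval, $R[\SA_R[ip(i)]..] > Uc_i$, and $x_i = S$.

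With the two cases identified, the middle inequality in each part drops out by comparing $S[i..]$ (which begins with $Uc_i$) and $R[q_i..]$ (which begins with $U$) at position $|U|+1$: when $x_i = S$ the character at this position in $R[q_i..]$ exceeds $c_i$, giving $S[i..] < R[q_i..]$, and symmetrically for $x_i = L$. For the outer inequalities, I would split on whether $\SA_R[ip(i)-1]$ (respectively $\SA_R[ip(i)+1]$) lies inside or outside the $U$-interval. Inside, the same character comparison at position $|U|+1$ applies, combined with the (dual) maximality of $ip(i)$ within the interval. Outside, $R[\SA_R[ip(i) \pm 1]..]$ must differ from $U$ within positions $1$ to $|U|$; it cannot be a proper prefix of $U$ because every suffix of $R$ ends in $\#$, whereas $U$, being a prefix of a $\$$-terminated suffix of $S$, contains no $\#$. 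The mismatching character then forces the desired inequality with $S[i..]$.

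The main obstacle I anticipate is the bookkeeping for the ``outside-the-interval'' subcase and the degenerate case $U = \epsilon$. The latter can only happen for $i = |S|$ with $c_i = \$$, which forces $ip(i) = 1$ by definition; then $x_i = L$ holds automatically, since $Uc_i = \$ > \# = R[\SA_R[1]..]$, and the right side of (3) follows from $\$$ being strictly smaller than every alphabet character that may begin $R[\SA_R[2]..]$. The remaining boundary situations in which $\SA_R[ip(i) \pm 1]$ falls outside $\{1,\ldots,|R|\}$ leave the corresponding outer inequality vacuous.
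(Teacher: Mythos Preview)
Your proof is correct. The paper states this as an Observation without proof, treating the three claims as immediate from the definitions of $ip(i)$ and $\ems(i)$; your argument spells out exactly the reasoning one would expect, namely that the clause of the $ip$-definition in force is determined by $x_i$, that the middle inequalities follow from comparing at position $|U|+1$ (using that $Uc_i$ does not occur in $R$), and that the outer inequalities follow by a case split on whether the neighbouring $\SA_R$-entry lies in the $U$-interval. One minor simplification: in part~(2), when $x_i=S$ the insert point is always the left endpoint of the $U$-interval, so $\SA_R[ip(i)-1]$ is automatically outside the interval and only your ``outside'' subcase is needed there; the ``inside'' subcase is genuinely required only for part~(3).
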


The enhanced CMS can be used in a similar way as the CMS to recover the enhanced matching statistics (including the matching statistics) of each $i$. Denote by $\ihead(i)$ the next insert-head to the left of $i$, i.e.\ $\ihead(i) = \max\{j\leq i \mid j \text{ is an insert-head}\}$. Note that $\ihead(i) = \pred_{H'}(i)$. 

\begin{lemma}\label{lemma:eCMS}
Let $1\leq i \leq |S|$, let $\eCMS$ be the enhanced CMS of $S$ w.r.t.\ $R$. Let $j = \ihead(i)$, $k=i-j$, and $\ems(j) = (q_j,\ell_j,x_j,c_j)$.  Then $\ems(i) = (q_j+k,\ell_j-k,x_j,c_j)$, and $ip(i) = \ISA_R[q_j+k]$. In particular, $q_j+k$ is an occurrence and $\ell_j-k$ is the length of the matching factor of $i$ (in other words, the matching statistics entry $\MS[i]$). 
\end{lemma}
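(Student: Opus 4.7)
The plan is to reduce everything to the single fact that, strictly between $j$ and $i$, and also at $i$ itself when $i>j$, there are no insert-heads, by the definition $j=\ihead(i)=\pred_{H'}(i)$. The case $i=j$ is trivial (then $k=0$), so I would immediately focus on $i>j$, where none of $j+1,j+2,\ldots,i$ is an insert-head.

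First I would establish the position component $q_i=q_j+k$. By definition, $t$ is an insert-head iff $\SA_R[ip(t)]\neq \SA_R[ip(t-1)]+1$. Since none of $j+1,\ldots,i$ is an insert-head, we have $\SA_R[ip(j+s)]=\SA_R[ip(j+s-1)]+1$ for $s=1,\ldots,k$, so telescoping gives $\SA_R[ip(i)]=\SA_R[ip(j)]+k$, i.e.\ $q_i=q_j+k$. Applying $\ISA_R$ to both sides then gives $ip(i)=\ISA_R[q_j+k]$. Next, using the remark that every head is an insert-head, none of $j+1,\ldots,i$ is a head either; thus by definition of head, $\ell_{j+s}=\ell_{j+s-1}-1$ for each such $s$, yielding $\ell_i=\ell_j-k$. (I would also note, as in Lemma~\ref{lemma:CMS}, that $q_j+k$ is indeed an occurrence of the matching factor of length $\ell_j-k$ at position $i$.)

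For the mismatch character, observe that it is read from $S$, not $R$: $c_i=S[i+\ell_i]=S[(j+k)+(\ell_j-k)]=S[j+\ell_j]=c_j$. The part I expect to require the most care is the type component $x_i=x_j$. I would argue as follows: let $U_j=R[q_j..q_j+\ell_j-1]$ be the matching factor at $j$ and $U_i=R[q_j+k..q_j+\ell_j-1]$ the one at $i$. Comparing $U_jc_j$ with $R[q_j..]$, the first position of disagreement is index $\ell_j$, where $U_jc_j$ carries $c_j$ and $R[q_j..]$ carries $R[q_j+\ell_j]$. Comparing $U_ic_i$ with $R[q_j+k..]$, the first position of disagreement is index $\ell_j-k$, where $U_ic_i$ carries $c_i=c_j$ and $R[q_j+k..]$ carries $R[(q_j+k)+(\ell_j-k)]=R[q_j+\ell_j]$. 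In both cases the deciding comparison is between the same two characters, so $x_i=x_j$.

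Putting the four identities together yields $\ems(i)=(q_j+k,\ell_j-k,x_j,c_j)$ and $ip(i)=\ISA_R[q_j+k]$, as claimed. The main obstacle is the type-component step, because it is the only one where one must track \emph{two} distinct suffix comparisons and verify that the mismatch resolution is the same; all other components are telescoping arguments analogous to Lemma~\ref{lemma:CMS}.
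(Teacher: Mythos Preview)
Your proof is correct and follows precisely the approach the paper indicates: the paper's own proof reads in full ``Analogous to Lemma~\ref{lemma:CMS}, resp.\ straightforward from the definitions,'' and you have simply spelled out those details. Your telescoping arguments for $q_i$ and $\ell_i$ are the analogue of Lemma~\ref{lemma:CMS}, and the $c_i$ and $x_i$ components are indeed direct from the definitions once $q_i$ and $\ell_i$ are in hand.
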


\begin{proof}
Analogous to Lemma~\ref{lemma:CMS}, resp.\ straightforward from the definitions. 
\end{proof}

Similarly to the \CMS\ (cp. Prop.~\ref{prop:CMS}), the enhanced \CMS\ allows access to all values for every index $i$, using space $\Oh(\chi')$ and time $\Oh(\log \log \chi')$, where $\chi'=|H'|$ is the number of insert-heads. Again, this is due to the fact that the predecessor data structure on the set $H'$ of insert-heads allows retrieving $\pred_{H'}(i) =\ihead(i)$ in $\Oh(\log \log |H'|)$ time, and the values of $\ems(i)$ can then be computed in $\Oh(1)$ time (Lemma~\ref{lemma:eCMS}). 

\medskip

We close this subsection by remarking that for a collection of similar genomes, one can expect the number of heads to be small. Indeed, on a 500MB viral genome data set (see Section~\ref{sec:experiments}) containing approximately 10,000 SARS-cov2 genomes, we observed the number of heads to be 5,326,226 (100x less than the input size) and the number of insert heads to be 6,537,294.

\subsection{Computing the CMS}\label{sec:computing_cms}

It is well known that the matching statistics of $S$ w.r.t.\ $R$ can be computed in time $O(|R| + |S|\log\sigma)$ and $O(|R|)$ space by using, for example, the suffix tree of $R$, as described in Chang and Lawler's original paper~\cite{CL94}. Since then, several authors have described similar algorithms for computing matching statistics, all focussed on reducing space requirements via the use of compressed indexes instead of the suffix tree~\cite{AKO04,OGK10,BCD18}. These algorithms all incur the slowdowns typical of compressed data structures.

In our setting, where end-to-end runtime is the priority, it is the speed at which the matching statistics can be computed (rather than working space) that is paramount. Moreover, because the size of the reference is generally small relative to the total length of all the strings $S_i \in {\cal C}$, we have some freedom to use large index data structures on $R$ to compute the matching statistics, without overall memory usage getting out of hand. With these factors in mind, we take the following approach to computing CMS. The algorithm is similar to that of Chang and Lawler, but makes use of array-based data structures rather than the suffix tree. 

Recall that, given the suffix array $\SA_R$ of string $R$ and a substring $\Y$ of $R$, the $\Y$-interval  is the interval $\SA_R[s..e]$ that contains all suffixes having $\Y$ as a prefix. 

\begin{definition}[Right extension and left contraction]
For a character $c$ and a string $\Y$, the computation
of the $\Yc$-interval from the $\Y$-interval is called a {\em right extension} and the computation of
the $\Y$-interval from $\cY$-interval is called a {\em left contraction}. 
\end{definition}

We remark that a left contraction is equivalent to following a (possibly implicit) suffix link in the suffix tree of $R$ and a right extension is a downward movement (either to a child or along an edge) in the suffix tree of $R$.

Given a $Y$-interval, because of the lexicographical ordering on the $\SA_R$, we can implement a right extension to a $\Yc$-interval in $O(\log |R|)$ time by using a pair of binary searches (with $c$ as the search key), one to find the lefthand end of the $\Yc$-interval and another to find the righthand end. If a right extension is empty then there are no occurrences of $\Yc$ in $R$, but we can have the binary search return to us the insert point where it would have been in $\SA_R$. 

On the other hand, given a $cY$-interval, $\SA_R[s..e]$, we can compute the $Y$-interval (i.e. perform a left contraction) in the following way. Let the target $Y$-interval be $\SA_R[x..y]$. Observe that both $\SA_R[s]+1$ and $\SA_R[e]+1$ must be inside the $Y$-interval, $SA_R[x..y]$---that is, $s' = \ISA_R[\SA_R[s]+1] \in [x..y]$ and $e' = \ISA_R[\SA_R[e]+1] \in [x..y]$. To finish computing $\SA_R[x..y]$ from $\SA_R[s'..e']$ there are two cases to consider. Firstly, if $s' = e'$ and $|Y| > \LCP_R[s']$, then $\SA_R[s']$ is the only occurrence of $Y$ and we are done (the $Y$-interval is a singleton). Alternatively, $s' \ne e'$ and we compute $\SA_R[x..y]$ using $\NSV/\PSV$ queries on $LCP_R$, in particular $\SA_R[x..y] = \SA_R[\PSV(\LCP_R,s')..\NSV(\LCP_R,e')]$.

With these ideas in place, we are ready to describe the matching statistics algorithm.
We first compute $\SA_R$, $\ISA_R$, and $\LCP_R$ for $R$ and preprocess $\LCP_R$ for $\NSV/\PSV$ queries. The elements of the $\MS$ will be computed in left-to-right order, $\MS[1], \MS[2], \ldots, \MS[|S|]$. Note that this makes it trivial to save only the heads (or iheads) and so compute the CMS (or eCMS) instead.
To find $\MS[1]$ use successive right extensions   starting with the interval $SA_R[1..|R|]$, searching with successive characters of $S[1..]$ until the right extension is empty, at which point we know $\ell_1$ and $p_1$. 
At a generic step in the algorithm, immediately after computing $\MS[i]$, we know the interval $\SA_R[s_i..e_i]$ containing all the occurrences of $R[p_i..p_i+\ell_i-1]$. To compute $\MS[i+1]$ we first compute the left contraction of $\SA_R[s_i..e_i]$, followed by as many right contractions as possible until $\ell_{i+1}$ and $p_{i+1}$ are known.

\medskip

When profiling an implementation of the above algorithm, we noticed that very often the sequence of right extensions ended with a singleton interval (i.e., an interval of size one) and so was the interval reached by the left contraction that followed. In terms of the suffix tree, this corresponds to the match between $R$ and the current suffix of $S_i$ being inside a leaf branch. This frequently happens on genome collections because each sequence is likely to have much longer matches with other sequences (in this case with $R$) than it does with itself (a single genome tends to look fairly random, at least by string complexity measures).

A simple heuristic to exploit this phenomenon is to compare $\ell_i$ to the maximum value in the entire $\LCP_R$ array of $R$ immediately after $\MS[i]$ has been computed. If $\ell_i-1 > \max(\LCP_R)$ then $\ISA_R[p_i+1]$ will also be inside a leaf branch (i.e., the left contraction will also be a singleton interval), and so the left contraction can be computed trivially as $\ISA_R[p_i+1]$---with no subsequent $\NSV/\PSV$ queries or access to $\LCP_R$ required to expand the interval. Although this gives no asymptotic improvement, there is potential gain from the probable cache miss(es) avoided by not making random accesses to those large data structures.

On a viral genome data set (see Section~\ref{sec:experiments}), $\max(\LCP_R)$ was 14, compared to an average $\ell_i$ value of over $1,100$, and this heuristic saved lots of computation. On a human chromosome data set, however, $\max(\LCP_R)$ was in the hundreds of thousands, and so we generalized the trick in the following way. We divide the LCP array up into blocks of size $b$ and compute the minimum of each block. These minima are stored in an array $M$ of size $|R|/b$, and $b$ is chosen so that $M$ is small enough to comfortably fit in cache. Now, when transitioning from $\MS[i]$ to $\MS[i+1]$, if $\ell_i > M[\ISA_R[p_i+1]/b]$ then there is a single match corresponding to $\MS[i+1]$, which we compute with right extensions. 
This generalized form of the heuristic has a consistent and noticeable effect in practice. For a 500MB viral genome data set its use reduced CMS computation from 12.23 seconds to 2.34 seconds. On the human chromosome data set the effect is even more dramatic: from 76.50 seconds down to 7.14 seconds.

%%%%%%%%%%%%%%%%%%%%%%%%%%%%%%%%%%%%%%%%%%%%%%%%%%%%%%%%%%%%%%%%%%%%%%%%%%%%%%%%%%%%%%%
%% SECTION 
%%%%%%%%%%%%%%%%%%%%%%%%%%%%%%%%%%%%%%%%%%%%%%%%%%%%%%%%%%%%%%%%%%%%%%%%%%%%%%%%%%%%%%%

\section{Comparing two suffixes via the enhanced CMS}\label{sec:comparing}

We will now show how to use the enhanced CMS of the collection ${\cal C}$ w.r.t.\ $R$ to define a partial order on the set of suffixes of strings in ${\cal C}$ (Prop.~\ref{prop:PO}), and how to break ties when the entries are identical (Lemma~\ref{lemma:sufcomp_3}). These results can then be used either directly to determine the relative order of any two of the suffixes (Prop.~\ref{prop:suffix_comparison}), or as a way of inducing the complete order once that of the subset of the insert-heads has been determined (Prop.~\ref{prop:suffix_sort}).

We will prove Prop.~\ref{prop:PO} via two lemmas. Recall that in the \eCMS\ we only have the entries referring to the insert-heads; however, Lemma~\ref{lemma:eCMS} tells us how to compute them for any position. 

\begin{lemma}\label{lemma:sufcomp_1}
Let $1\leq d,d' \leq m$ and $1\leq i\leq |S_d|$, $1\leq i'\leq |S_{d'}|$. If $ip(d,i) < ip(d',i')$, then $S_d[i..] < S_{d'}[i'..]$.
\end{lemma}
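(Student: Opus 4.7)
The plan is to split on the flags $x_{d,i}, x_{d',i'} \in \{S,L\}$ recorded in the enhanced matching statistics, and to chain inequalities through suffixes of $R$ via Observation~\ref{obs:ip}. Writing $k = ip(d,i)$ and $k' = ip(d',i')$, so that $k < k'$, I would dispose of three of the four flag combinations by transitivity: for $(L,L)$ the chain $S_d[i..] < R[\SA_R[k+1]..] \leq R[\SA_R[k']..] < S_{d'}[i'..]$; for $(S,S)$ the chain $S_d[i..] < R[\SA_R[k]..] \leq R[\SA_R[k'-1]..] < S_{d'}[i'..]$; and for $(S,L)$ the chain $S_d[i..] < R[\SA_R[k]..] \leq R[\SA_R[k']..] < S_{d'}[i'..]$. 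The same style of chain also handles $(L,S)$ whenever $k' \geq k+2$, via $S_d[i..] < R[\SA_R[k+1]..] \leq R[\SA_R[k'-1]..] < S_{d'}[i'..]$.

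The delicate case, which I expect to be the main obstacle, is $(L,S)$ with $k' = k+1$: here Observation~\ref{obs:ip} places both $S_d[i..]$ and $S_{d'}[i'..]$ strictly between the two adjacent suffixes $V = R[\SA_R[k]..]$ and $V' = R[\SA_R[k+1]..]$ of $R$, so chaining alone cannot order them. To handle this subcase I would introduce $W$, the longest common prefix of $V$ and $V'$, together with the characters $\alpha = V[|W|+1] < V'[|W|+1] = \beta$. Using the extremality of $ip$, I would then derive two length bounds on the matching factors: $|U_{d,i}| \geq |W|$, since $|U_{d,i}| < |W|$ would make $U_{d,i}$ a prefix of $W$ and hence of $V'$, with $V'$'s character after $U_{d,i}$ inherited from $V$ via $W$ and therefore $< c_{d,i}$, contradicting that $k$ is the \emph{largest} such rank; and $|U_{d',i'}| > |W|$, since $|U_{d',i'}| \leq |W|$ would make $U_{d',i'}$ a prefix of $W$ and hence of $V$, placing $V$ in the $U_{d',i'}$-interval at rank $k < k+1$, contradicting that $k+1$ is the \emph{smallest} rank in that interval.

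With these bounds in hand I would read off position $|W|+1$ in each suffix. Since $|U_{d',i'}| > |W|$, we get $S_{d'}[i'..][|W|+1] = V'[|W|+1] = \beta$. For $S_d[i..]$ there are two branches: if $|U_{d,i}| > |W|$ then $S_d[i..][|W|+1] = V[|W|+1] = \alpha < \beta$; if $|U_{d,i}| = |W|$ then $S_d[i..][|W|+1] = c_{d,i}$, and a short side check shows $\beta > c_{d,i}$ (otherwise $V'$ would either witness $U_{d,i} c_{d,i}$ in $R$, or be a $<c_{d,i}$-continuation occurrence of $U_{d,i}$ at rank $k+1$, each a contradiction). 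In both branches $S_d[i..][|W|+1] < S_{d'}[i'..][|W|+1]$, while the two suffixes agree with $W$ on positions $1,\dots,|W|$, so $S_d[i..] < S_{d'}[i'..]$ as required.
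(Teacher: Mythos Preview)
Your proof is correct and follows the same overall strategy as the paper: dispose of most flag combinations by chaining through suffixes of $R$ via Observation~\ref{obs:ip}, and isolate the only genuinely non-trivial subcase $(x_{d,i},x_{d',i'}) = (L,S)$ with $ip(d',i') = ip(d,i)+1$, where both collection suffixes lie strictly between the two adjacent reference suffixes.

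The two arguments diverge only in how that subcase is dispatched. The paper sets $V = \lcp(U_{d,i}, U_{d',i'})$, the longest common prefix of the two \emph{matching factors}, and cases on whether $V$ equals $U_{d,i}$, equals $U_{d',i'}$, or is a proper prefix of both. You instead anchor on $W = \lcp(R[\SA_R[k]..], R[\SA_R[k+1]..])$, the longest common prefix of the two adjacent \emph{reference suffixes}, first derive the length bounds $|U_{d,i}| \geq |W|$ and $|U_{d',i'}| > |W|$ from the maximality (resp.\ minimality) in the definition of $ip$, and then compare the two collection suffixes at position $|W|+1$. Both routes exploit exactly the same extremality properties of the insert point and the fact that $U_{d,i}c_{d,i}$ does not occur in $R$; neither is materially simpler. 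Your version has the mild advantage of separating the structural length bounds from the final character comparison, which makes the role of each hypothesis a bit more transparent.
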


\begin{proof}
If $ip(d',i')-ip(d,i)>1$, then there exists an index $j$ s.t.\ $ip(d,i)<j<ip(d',i')$, and therefore $S_d[i..] < R[\SA_R[ip(d,i)+1]..] \leq  R[\SA_R[j]..] \leq R[\SA_R[ip(d',i')-1]..] < S_{d'}[i'..]$. Now let $ip(d',i') = ip(d,i)+1$.  If $x_{d,i}=S$, then $S_d[i..]<R[\SA_R[ip(d,i)]..] = R[\SA_R[ip(d',i')-1]..] < S_{d'}[i'..]$, by Obs.~\ref{obs:ip}. Similarly, if $x_{d',i'}=L$, then $S_d[i..]<R[\SA_R[ip(d,i)+1]..] = R[\SA_R[ip(d',i')]..] < S_{d'}[i'..]$. Finally, let $x_{d,i}=L$ and $x_{d',i'}=S$. Then $R[\SA_R[ip(d,i)]..] < S_d[i..],S_{d'}[i'..] < R[\SA_R[ip(d,i)+1]..] = R[\SA_R[ip(d',i')]..]$. Let $U$ be the matching factor of $(d,i)$, $U'$ that of $(d',i')$, and $V = \lcp(U,U')$, the longest common prefix of the two. $V$ cannot be equal to $U'$ because then $U'$ would be a proper prefix of $U$, but $ip(d',i')$ is the smallest occurrence in $R$ of $U'$. If $V=U$, then $U$ is a proper prefix of $U'$, and by definition of $ip(d',i')$, the character following $U$ in $U'$ is strictly greater than the mismatch character $c_i$ of $(d,i)$. Finally, if $V$ is a proper prefix both of $U$ and of $U'$, then the character following $V$ in $U$ is smaller than the one following $V$ in $U'$, therefore $U<U'$. Since $U$ is a prefix of $S_d[i..]$ and $U'$ is a prefix of $S_{d'}[i'..]$, and neither is prefix of the other, this implies $S_d[i..] < S_{d'}[i'..]$. 
\end{proof}

\begin{lemma}\label{lemma:sufcomp_2}
Let $1\leq d,d' \leq m$ and $1\leq i\leq |S_d|$, $1\leq i'\leq |S_{d'}|$, and $ip(d,i) = ip(d',i')$. 
\begin{enumerate}
    \item If $\ell_{d,i} < \ell_{d',i'}$ and $x_{d,i}=S$, then $S_d[i..] < S_{d'}[i'..]$. 
    \item If $\ell_{d,i} < \ell_{d',i'}$ and $x_{d,i}=L$, then $S_{d'}[i'..] < S_d[i..]$. 
    \item If $\ell_{d,i} = \ell_{d',i'}$ and $x_{d,i}=S$ and $x_{d',i'}=L$, then $S_d[i..] < S_{d'}[i'..]$. 
    \item If $\ell_{d,i} = \ell_{d',i'}$ and $x_{d,i}=x_{d',i'}$ and $c_{d,i}<c_{d',i'}$, then $S_d[i..] < S_{d'}[i'..]$. 
\end{enumerate}
\end{lemma}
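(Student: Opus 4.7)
My plan is to reduce all four cases to a direct symbol-by-symbol comparison of $S_d[i..]$ against $S_{d'}[i'..]$, starting from one unifying observation: by Observation~\ref{obs:ip}(1), the hypothesis $ip(d,i) = ip(d',i')$ yields $q_{d,i} = q_{d',i'}$, which I will denote simply by $q$. Consequently both matching factors---call them $U$ for $(d,i)$, of length $\ell_{d,i}$, and $U'$ for $(d',i')$, of length $\ell_{d',i'}$---are prefixes of the same text suffix $R[q..]$, so one of $U,U'$ is a prefix of the other.

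Before the case split, I will establish a short preliminary reformulation of the flag $x$. Since $Uc_{d,i}$ cannot be a prefix of $R[q..]$ (otherwise $Uc_{d,i}$ would be a prefix of $S_d[i..]$ of length $\ell_{d,i}+1$ occurring in $R$, contradicting maximality of the matching factor $U$), the lexicographic comparison of $Uc_{d,i}$ with $R[q..] = U \cdot R[q+\ell_{d,i}]\cdots$ collapses to the single-character comparison between $c_{d,i}$ and $R[q+\ell_{d,i}]$. Hence $x_{d,i} = S$ is equivalent to $c_{d,i} < R[q+\ell_{d,i}]$ and $x_{d,i} = L$ to $c_{d,i} > R[q+\ell_{d,i}]$, and symmetrically for $(d',i')$.

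Now each of the four cases becomes a two-line calculation. For (1), $\ell_{d,i} < \ell_{d',i'}$ makes $U$ a proper prefix of $U'$, so $S_{d'}[i'..]$ starts with $U \cdot R[q+\ell_{d,i}] \cdots$ while $S_d[i..]$ starts with $U \cdot c_{d,i} \cdots$; the flag $x_{d,i} = S$ gives $c_{d,i} < R[q+\ell_{d,i}]$, whence $S_d[i..] < S_{d'}[i'..]$. Case (2) is symmetric, with $c_{d,i} > R[q+\ell_{d,i}]$ reversing the inequality. For (3), equal lengths force $U = U'$, and the two flags sandwich $R[q+\ell_{d,i}]$ strictly between $c_{d,i}$ and $c_{d',i'}$, so $c_{d,i} < c_{d',i'}$ and the first differing position decides. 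For (4), $U = U'$ again and the mismatch characters themselves differ in the stated direction.

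The main obstacle is not any mathematical depth but the careful translation of the $x$-flag into a concrete character inequality; the key supporting remark is the maximality argument that rules out $Uc_{d,i}$ being a prefix of $R[q..]$. Once that preliminary is in place, each of the four cases reduces to reading off where $S_d[i..]$ and $S_{d'}[i'..]$ first disagree.
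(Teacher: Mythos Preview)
Your proof is correct and follows essentially the same approach as the paper's: both arguments rest on observing that $ip(d,i)=ip(d',i')$ forces $U$ and $U'$ to be prefixes of the common suffix $R[q..]$, and then compare the first position where $S_d[i..]$ and $S_{d'}[i'..]$ diverge. The only minor stylistic difference is in part~(3): the paper invokes Observation~\ref{obs:ip}(2)--(3) directly to sandwich $R[q..]$ strictly between the two suffixes, whereas you unwind the $x$-flag into the character inequality $c_{d,i} < R[q+\ell_{d,i}] < c_{d',i'}$; both routes are equivalent and equally short.
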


\begin{proof}
{\em 1.,2.:} Let $U$ be the matching factor of $i$, and $U'$ that of $i'$. Since $\ell_{d,i}<\ell_{d',i'}$, this implies that $U$ is a proper prefix of $U'$. If $x_{d,i}=S$, then the mismatch character $c_{d,i}$ is smaller than the character following $U$ in $U'$, therefore $S_d[i..] < S_{d'}[i'..]$. If $x_{d,i}=L$, then it is greater, and thus $S_{d'}[i'..] < S_d[i..]$. {\em 3.} follows directly from Observation~\ref{obs:ip}, since now $S[i..] < R[\SA_R[ip(i)]..] < S[i'..]$. {\em 4.: } Now both suffixes start with the same matching factor $U$, followed by different mismatch characters, which define their relative order.
\end{proof}

These two lemmas in fact imply the following: 

\begin{proposition}\label{prop:PO}
The conditions of Lemmas~\ref{lemma:sufcomp_1} and~\ref{lemma:sufcomp_2} result in a partial order of the suffixes of strings in ${\cal C}$, of which the lexicographic order is a refinement. 
\end{proposition}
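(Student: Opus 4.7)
The plan is to define a binary relation $\prec$ on the set of all suffixes of strings in ${\cal C}$ by setting $(d,i) \prec (d',i')$ whenever one of the hypotheses in Lemma~\ref{lemma:sufcomp_1} or Lemma~\ref{lemma:sufcomp_2} is satisfied by the pair and the conclusion is $S_d[i..] < S_{d'}[i'..]$. With this definition, the two lemmas taken together assert exactly that $\prec$ is contained in the lexicographic order on suffixes. So the technical content of the proposition is already provided by Lemmas~\ref{lemma:sufcomp_1} and~\ref{lemma:sufcomp_2}; what remains is a structural observation.

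The plan for the second step is then to appeal to the general fact that any subrelation of a strict total order is itself a strict partial order. The lexicographic order on the (distinct) suffixes of the strings in ${\cal C}$ is a strict total order: it is irreflexive, antisymmetric, and transitive. Inheriting these properties, $\prec$ is irreflexive (no suffix is lex-smaller than itself), antisymmetric (if $(d,i) \prec (d',i')$ then $S_d[i..] < S_{d'}[i'..]$, which forbids $(d',i') \prec (d,i)$), and transitive (if $(d,i) \prec (d',i')$ and $(d',i') \prec (d'',i'')$, then $S_d[i..] < S_{d'}[i'..] < S_{d''}[i''..]$, so Lemmas~\ref{lemma:sufcomp_1} and~\ref{lemma:sufcomp_2} need not even be consulted for the pair $(d,i), (d'',i'')$; it suffices that lex order transits). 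That the lex order is a refinement of $\prec$ holds tautologically from the inclusion $\prec\, \subseteq\, <_{\text{lex}}$.

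The only potential obstacle is ensuring well-definedness: that no pair $(d,i), (d',i')$ is assigned two \emph{contradictory} orderings by different clauses of the lemmas. This is automatic, because each clause concludes $S_d[i..] < S_{d'}[i'..]$ (the genuine lex order), so any two applicable clauses must agree. Concretely, the cases covered by Lemma~\ref{lemma:sufcomp_1} (distinct insert points) and by Lemma~\ref{lemma:sufcomp_2} (equal insert points) are disjoint by Observation~\ref{obs:ip}(1), and within Lemma~\ref{lemma:sufcomp_2} the four clauses are mutually exclusive on the parameters $\ell$, $x$, and $c$. So the proof reduces to an invocation of the two lemmas plus the closure of strict partial orders under restriction.
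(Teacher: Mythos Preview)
The paper does not give a proof of this proposition; it simply states it as an immediate consequence of Lemmas~\ref{lemma:sufcomp_1} and~\ref{lemma:sufcomp_2}. Your write-up is therefore more detailed than the paper's treatment, and your overall plan---define $\prec$ via the lemma hypotheses and observe that $\prec\,\subseteq\,<_{\text{lex}}$---is exactly the right reading.

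There is, however, a gap in your transitivity argument. You invoke the ``general fact that any subrelation of a strict total order is itself a strict partial order,'' but this is false: on $\{1,2,3\}$ with the usual $<$, the subrelation $\{(1,2),(2,3)\}$ is not transitive. Concretely, from $(d,i)\prec(d',i')$ and $(d',i')\prec(d'',i'')$ you deduce only $S_d[i..]<_{\text{lex}}S_{d''}[i''..]$; to conclude $(d,i)\prec(d'',i'')$ you must still check that one of the lemma hypotheses applies to that pair, which your parenthetical explicitly declines to do.

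The fix is short and worth including. Observe that whether $(d,i)\prec(d',i')$ holds depends only on the tuples $\ems(d,i)$ and $\ems(d',i')$. Moreover, for any two \emph{distinct} tuples $(q,\ell,x,c)$ the lemmas decide an order: if the $q$-components differ, Lemma~\ref{lemma:sufcomp_1} applies (via Observation~\ref{obs:ip}(1)); otherwise one of the four clauses of Lemma~\ref{lemma:sufcomp_2} applies, since they exhaust all ways the remaining components can differ. Hence $\prec$ is a strict total order on the set of distinct $\ems$-values, i.e.\ a strict weak order on suffixes, and in particular transitive. With this observation your argument is complete and matches the paper's intent.
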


What happens if two suffixes $S_d[i..]$ and $S_{d'}[i'..]$ have the same values of the enhanced matching statistics, i.e.\ $\ems(d,i) = \ems(d',i')$? The next lemma says that in this case, the relative order of the two suffixes is decided by the relative order of the heads preceding their respective mismatch characters. 

\begin{lemma}\label{lemma:sufcomp_3}
Let $1\leq d,d' \leq m$ and $1\leq i\leq |S_d|$, $1\leq i'\leq |S_{d'}|$. If $ip(d,i) = ip(d',i')$, $\ell_{d,i} = \ell_{d',i'}$, $x_{d,i}=x_{d',i'}$, and $c_{d,i}=c_{d',i'}$,  then $S_d[i..] < S_{d'}[i'..]$ if and only if $S_d[j..] < S_{d'}[j'..]$, where $(d,j) = \ihead(d,i+\ell_i)$ and $(d',j') = \ihead(d',i'+\ell_{i'})$. 
\end{lemma}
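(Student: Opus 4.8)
The plan is to strip off the common prefix of the two suffixes, re-express the resulting comparison through the preceding insert-heads using Lemma~\ref{lemma:eCMS}, and close with a short case analysis. \emph{Step 1 (reduction to the mismatch positions).} First I would note that $ip(d,i)=ip(d',i')$ gives $q:=\SA_R[ip(d,i)]=\SA_R[ip(d',i')]$, so the two matching factors are literally the same string $U=R[q..q+\ell-1]$ with $\ell:=\ell_{d,i}=\ell_{d',i'}$; together with $c_{d,i}=c_{d',i'}=:c$ this yields $S_d[i..i+\ell]=S_{d'}[i'..i'+\ell]=Uc$. Since distinct suffixes in ${\cal C}$ are never prefixes of one another, the comparison of $S_d[i..]$ and $S_{d'}[i'..]$ is decided strictly beyond this shared prefix, so $S_d[i..]<S_{d'}[i'..]$ iff $S_d[t..]<S_{d'}[t'..]$, where $t:=i+\ell_i$, $t':=i'+\ell_{i'}$ are the mismatch positions and $S_d[t]=S_{d'}[t']=c$. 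If $c$ is an end-of-string character it is the (document-indexed) sentinel of $S_d$, so $c_{d,i}=c_{d',i'}$ forces $d=d'$, hence $i=i'$, $j=j'$, and the claim is trivial; thus assume $c\in\Sigma$, whence $t<|S_d|$, $t'<|S_{d'}|$ and $\ell_t,\ell_{t'}\geq 1$.

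\emph{Step 2 (rewriting via the insert-heads).} Let $h=\ihead(d,i)$. By Lemma~\ref{lemma:eCMS}, $\ell_i=\ell_h-(i-h)$, so $t=h+\ell_h$; since $c\in\Sigma$ the run of $h$ is not the last one, so the next insert-head $h''$ after $h$ exists, and a short computation (using $\ell_{h''-1}\geq 1$) places it in $(i,t]$. Hence $j:=\ihead(d,t)$ satisfies $i<j\leq t$, and symmetrically $i'<j'\leq t'$. With $k=t-j$, $k'=t'-j'$, Lemma~\ref{lemma:eCMS} gives $\ems(t)=(q_j+k,\ell_j-k,x_j,c_j)$, so $S_d[j..]=S_d[j..t-1]\cdot S_d[t..]=P\cdot S_d[t..]$ with $P:=S_d[j..t-1]$; this $P$ is simultaneously the length-$k$ suffix of the common factor $U$ (because $S_d[i..t-1]=U$ and $i<j\leq t$) and the length-$k$ prefix of the matching factor $U_j=R[q_j..q_j+\ell_j-1]$ of $j$, and the first letter of $S_d[t..]$ is $R[q_j+k]=c$. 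Symmetrically $S_{d'}[j'..]=P'\cdot S_{d'}[t'..]$ with $P'$ the length-$k'$ suffix of $U$ and a prefix of $U_{j'}$; in particular $P$ and $P'$ are nested suffixes of $U$, say $P$ a suffix of $P'$ when $k\leq k'$.

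\emph{Step 3 (the core equivalence, and the main obstacle).} It remains to prove $S_d[t..]<S_{d'}[t'..]$ iff $P\cdot S_d[t..]<P'\cdot S_{d'}[t'..]$. I would exploit that $U_j$ is exactly the longest common prefix of $S_d[j..]$ with its immediate $\SA_R$-neighbour $R[q_j..]$ — they agree on $U_j$ and then differ, since maximality of the matching factor forces $c_j\neq R[q_j+\ell_j]$ — so that deleting $P$ from both $S_d[j..]$ and $R[q_j..]$ leaves $S_d[t..]$ and $R[q_j+k..]=R[q_t..]$ still sharing the factor $U_t$; combining this with Observation~\ref{obs:ip} (which pins $S_d[j..]$ immediately next to $R[q_j..]$ in $\SA_R$, on the side determined by $x_j$) and comparing the pairs $(j,j')$ and $(t,t')$ through Lemmas~\ref{lemma:sufcomp_1}--\ref{lemma:sufcomp_2}, one checks the two verdicts coincide. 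In the residual case $\ems(j)=\ems(j')$ one has $P=P'$ when $k=k'$ (so the comparisons are literally identical), and otherwise one reapplies the present lemma to $(j,j')$, which is legitimate because $j>i$ and $j'>i'$ make the total remaining suffix length a strictly decreasing measure. I expect this final step — showing that prepending the letters of $U$ forming $P$ and $P'$ cannot flip the comparison of $S_d[t..]$ and $S_{d'}[t'..]$ — to be the real work; everything preceding it is bookkeeping from the definitions, Observation~\ref{obs:ip} and Lemma~\ref{lemma:eCMS}.
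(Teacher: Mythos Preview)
Your Steps~1 and~2 are fine and essentially coincide with the paper's setup: both suffixes share the prefix $Uc$, and $i<j\le t$, $i'<j'\le t'$. The gap is in Step~3. You allow for the possibility that $k\ne k'$ (equivalently $P\ne P'$) and then try to argue that prepending two \emph{different} suffixes of $U$ to $S_d[t..]$ and $S_{d'}[t'..]$ cannot flip the comparison. That claim is neither obvious nor proved in your sketch: your appeal to Lemmas~\ref{lemma:sufcomp_1}--\ref{lemma:sufcomp_2} would tell you the verdict for $(j,j')$ when $\ems(j)\ne\ems(j')$, but it does not tell you the verdict for $(t,t')$, so there is nothing to ``check coincides''; and in the residual case $\ems(j)=\ems(j')$ with $k\ne k'$, re-applying the present lemma to $(j,j')$ relates $(j,j')$ to a \emph{new} pair of insert-heads further right, not to $(t,t')$. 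So the induction does not close.

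The paper avoids all of this by proving the single structural fact you are missing: $j-i=j'-i'$ (equivalently $k=k'$, hence $P=P'$). The argument is that the positions of the insert-heads inside the ranges $(i,t]$ and $(i',t']$ are forced to coincide, because both strings carry the identical segment $S_d[i..t]=S_{d'}[i'..t']=Uc$, and whether a position in that segment is an insert-head is determined by this common local data together with $R$. In particular, the last insert-head $\le t$ in $S_d$ sits at the same offset as the last insert-head $\le t'$ in $S_{d'}$. Once $P=P'$, your own observation applies: the comparison of $S_d[j..]$ with $S_{d'}[j'..]$ is literally the comparison of $S_d[t..]$ with $S_{d'}[t'..]$ after stripping the common prefix $P$, and the lemma follows immediately. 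So the ``real work'' you anticipate in Step~3 disappears entirely; the work is instead the combinatorial statement $j-i=j'-i'$, which you should prove directly.
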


\begin{proof}
We will prove that the relative position of the insert-head of $i$'s and $i'$'s mismatch character is the same, i.e.\ that $j-i = j'-i'$. The claim then follows. 

First note that $j>i$. This is because the matching factor of position $i$ ends in position $i+\ell_{d,i}-1$, so there must be a new insert-head after $i$ and at most at $i+\ell_{d,i}$, the position of the mismatch character. Similarly, $j'>i'$. The fact that $j=\ihead(i+\ell_{d,i})$ implies that there is a matching factor starting in position $j$ which spans the mismatch character $c=c_{d,i}=c_{d',i'}$. Let's write $Vc$ for the prefix of length $i+\ell_{d,i}-j$ of this matching factor. $V$ is a suffix of the matching factor $U$ of position $i$, but $Vc$ is not. However, $Vc$ is also a prefix of $S_{d'}[i'..]$. Therefore, $j'=i'+(j-i)$ is also an insert-head in $S_{d'}$. An analogous argument shows that any insert-head between $i'$ and $i'+\ell_{d',i'}$ in $S_{d'}$ is also an insert-head in $S_d$, in the same relative position. 
\end{proof}

\begin{proposition}\label{prop:suffix_comparison}
Let $R,S_1,\ldots,S_m$ be strings over $\Sigma$. Using the enhanced \CMS\ of ${\cal C}=\{S_1,\ldots,S_m\}$ w.r.t.\ $R$, we can decide, for any $1\leq d,d' \leq m$ and $1\leq i\leq |S_d|$, $1\leq i'\leq |S_{d'}|$, the relative order of $S_d[i..]$ and $S_{d'}[i'..]$ in $\Oh(\log\log\chi'\cdot \max_d \{\text{no.\ of insert-heads of } S_d\})$ time. 
\end{proposition}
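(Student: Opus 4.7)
The plan is to assemble Lemmas~\ref{lemma:sufcomp_1}--\ref{lemma:sufcomp_3} into a single comparison routine and bound the depth of the recursion that the third lemma entails. Given a query $(d,i)$ versus $(d',i')$, first recover $\ems(d,i)$ and $\ems(d',i')$ via the enhanced CMS. By Lemma~\ref{lemma:eCMS} each retrieval is a single predecessor query on $H'$ at cost $\Oh(\log\log\chi')$ plus constant-time arithmetic. Then dispatch on the two tuples. If $q_{d,i}\neq q_{d',i'}$, the insert points differ (Observation~\ref{obs:ip}.1) and Lemma~\ref{lemma:sufcomp_1} settles the order. Otherwise, if the lengths, types, or mismatch characters disagree, one of the four cases of Lemma~\ref{lemma:sufcomp_2} applies. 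In the remaining case the tuples coincide, and Lemma~\ref{lemma:sufcomp_3} reduces the problem to comparing $S_d[j..]$ and $S_{d'}[j'..]$, where $j=\ihead(d,i+\ell_{d,i})$ and $j'=\ihead(d',i'+\ell_{d',i'})$; I handle this by a recursive call on the same algorithm.

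Correctness is immediate from the three lemmas once one verifies that the recursion gracefully terminates even when a suffix is reduced to its terminator: at that point the corresponding $\ems$ tuple has $\ell=0$, its mismatch character is a sentinel, and the dispatch falls into either Lemma~\ref{lemma:sufcomp_1} or Lemma~\ref{lemma:sufcomp_2}(4), with document-index tie-breaking matching the definition of $\GSA$.

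For the running time, note that in the recursive case $j-i = j'-i' > 0$; the positivity of this offset is already justified inside the proof of Lemma~\ref{lemma:sufcomp_3}. Moreover, by the definition of $\ihead$, $j$ is the first insert-head of $S_d$ strictly exceeding $i$, so each recursive call consumes exactly one previously-unvisited insert-head of $S_d$ (and symmetrically of $S_{d'}$). Consequently the recursion depth is at most $\max_d\{\text{no.\ of insert-heads of }S_d\}$, and since each level performs $\Oh(1)$ predecessor queries on $H'$ at $\Oh(\log\log\chi')$ apiece, the total cost is $\Oh(\log\log\chi'\cdot \max_d\{\text{no.\ of insert-heads of }S_d\})$. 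The main obstacle I anticipate is certifying that every recursive step truly makes this structural progress---one fresh insert-head per call---rather than stalling; this rests on the positivity of $j-i$ and on handling the sentinel boundary cleanly, both of which are small but essential bookkeeping verifications rather than substantive arguments.
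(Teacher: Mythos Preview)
Your proposal is correct and mirrors the paper's own proof: retrieve the $\ems$ tuples via Lemma~\ref{lemma:eCMS}, dispatch on Lemmas~\ref{lemma:sufcomp_1} and~\ref{lemma:sufcomp_2}, and otherwise recurse via Lemma~\ref{lemma:sufcomp_3}, bounding the depth by the number of insert-heads in a single string. One small imprecision: $j=\ihead(d,i+\ell_{d,i})$ is the \emph{last} insert-head at or before $i+\ell_{d,i}$, not necessarily the first insert-head strictly after $i$, so a recursive step may skip several insert-heads; however, since each step still lands on a strictly later insert-head of $S_d$, your depth bound (and hence the overall time bound) remains valid.
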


\begin{proof}
Let $(d,j) = \ihead(d,i+\ell_i)$ and $(d',j') = \ihead(d',i'+\ell_{i'})$. From Lemma~\ref{lemma:eCMS} we get the four \eCMS-entries of $(d,i)$ and $(d',i')$, namely the insert positions $q_i$ resp.\ $q_{i'}$, the length of the matching factor, whether the mismatch characters is smaller or larger, and the mismatch character itself. If any of these differ for the two suffixes, then Lemmas~\ref{lemma:sufcomp_1} and~\ref{lemma:sufcomp_2} tell us their relative order. This check takes $\Oh(1)$ time. Otherwise, Lemma~\ref{lemma:sufcomp_3} shows that the relative order is determined by the next relevant heads. Iteratively applying the three lemmas, in the worst case, takes us through all heads for the strings $S_d$ and $S_{d'}$. 
\end{proof}

Instead of using Prop.~\ref{prop:suffix_comparison}, we will use these lemmas in the following way. We will first sort only  the insert-heads. The following proposition states that this suffices to determine the order of any two suffixes in constant time. 

\begin{proposition}\label{prop:suffix_sort}
Given the insert-heads in sorted order, the relative order of any two suffixes can be determined in $\Oh(\log\log\chi')$ time, where $\chi'$ is the number of insert-heads. 
\end{proposition}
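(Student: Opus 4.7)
}
The plan is to use Propositions~\ref{prop:CMS} and the lemmas in this section to reduce a suffix comparison to a single look-up in the given sorted list of insert-heads. First, given two suffixes $S_d[i..]$ and $S_{d'}[i'..]$, I would issue two predecessor queries on the set $H'$ of insert-heads, obtaining $\ihead(d,i)$ and $\ihead(d',i')$ in $\Oh(\log\log \chi')$ time each; using Lemma~\ref{lemma:eCMS} this yields the four-tuples $\ems(d,i)=(q_i,\ell_i,x_i,c_i)$ and $\ems(d',i')=(q_{i'},\ell_{i'},x_{i'},c_{i'})$ in $\Oh(1)$ additional time. Note that from $q_i,q_{i'}$ we get the insert points $ip(d,i)$ and $ip(d',i')$ by comparing these stored positions (by Observation~\ref{obs:ip}(1), equality of insert points is equality of $q$-values; inequality is settled by the lexicographic order of the corresponding suffixes of $R$, which is known via $\ISA_R$).

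If any of the conditions of Lemmas~\ref{lemma:sufcomp_1} or~\ref{lemma:sufcomp_2} applies, I can read off the relative order of $S_d[i..]$ and $S_{d'}[i'..]$ in $\Oh(1)$ time, so the total cost is dominated by the two predecessor queries, giving $\Oh(\log\log\chi')$ overall. The remaining case is when all four components of $\ems(d,i)$ and $\ems(d',i')$ coincide. Here Lemma~\ref{lemma:sufcomp_3} tells us that $S_d[i..]<S_{d'}[i'..]$ if and only if $S_d[j..]<S_{d'}[j'..]$, where $(d,j)=\ihead(d,i+\ell_i)$ and $(d',j')=\ihead(d',i'+\ell_{i'})$. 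Two further predecessor queries produce $j$ and $j'$ in $\Oh(\log\log\chi')$ time.

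The key observation that makes the argument terminate is that $j$ and $j'$ are themselves insert-heads, and the insert-heads are given to us in sorted order as a precondition. I would therefore store, alongside each insert-head, its rank in this sorted list (computed once, in linear time, from the sorted sequence); comparing $S_d[j..]$ and $S_{d'}[j'..]$ then amounts to comparing two precomputed ranks, which is $\Oh(1)$. Summing the costs, the comparison of $S_d[i..]$ and $S_{d'}[i'..]$ takes $\Oh(\log\log\chi')$ time in total.

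The only delicate point, and the one I would verify carefully in writing, is that Lemma~\ref{lemma:sufcomp_3} really does reduce the problem to a comparison between two suffixes that both start at insert-heads; this is exactly the content of the proof of that lemma, which shows $j-i=j'-i'$ and places the jump at the insert-head covering the (shared) mismatch character. Edge cases that need brief comment are when $i$ or $i'$ is itself an insert-head (the predecessor returns the position itself, with no effect on the argument) and when $i+\ell_i$ exceeds $|S_d|$, i.e. the mismatch character is the terminator; the terminator of each $S_d$ is naturally declared an insert-head, so the lookup still succeeds.
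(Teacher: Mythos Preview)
Your proposal is correct and follows essentially the same approach as the paper's own proof, which simply states that the result follows from Lemmas~\ref{lemma:sufcomp_1}, \ref{lemma:sufcomp_2}, and~\ref{lemma:sufcomp_3} because all checks are constant time and the predecessor queries cost $\Oh(\log\log\chi')$. You have spelled out in more detail what the paper leaves implicit---in particular the step of storing the rank of each insert-head in the sorted list so that the comparison delivered by Lemma~\ref{lemma:sufcomp_3} can be resolved in $\Oh(1)$---and your handling of the edge cases is sound.
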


\begin{proof}
Follows from Lemmas~\ref{lemma:sufcomp_1},~\ref{lemma:sufcomp_2}, and~\ref{lemma:sufcomp_3}, since all checks take constant time, and each of the two predecessor queries take $\Oh(\log\log\chi')$ time. 
\end{proof}

%%%%%%%%%%%%%%%%%%%%%%%%%%%%%%%%%%%%%%%%%%%%%%%%%%%%%%%%%%%%%%%%%%%%%%%%%%%%%%%%%%%%%%%
%% SECTION 
%%%%%%%%%%%%%%%%%%%%%%%%%%%%%%%%%%%%%%%%%%%%%%%%%%%%%%%%%%%%%%%%%%%%%%%%%%%%%%%%%%%%%%%

\section{Putting it all together}\label{sec:putting}

A high-level view of our algorithm is as follows. We first partially sort the insert-heads, then use this partial sort to generate a new string, whose suffixes we sort with an existing suffix sorting algorithm. This gives us a full sort of the insert heads. We then use this to sort the $S^*$-suffixes of the collection. Finally, we induce the remaining suffixes of the collection using the $S^*$-suffixes. 
We next give a schematic description of the algorithm. 

\medskip

\begin{quote}
\hrule
{\tt Algorithm 1}\\
{\bf input:} string collection ${\cal C}$, reference string $R$\\
{\bf output:} the $\GSA$ of ${\cal C}$
\hrule
\begin{itemize}
    \item {\bf Phase 1 - Augmenting and constructing data structures on $R$:} Preprocess $R$ (``augmenting'', see Sec.~\ref{sec:implementation}). Compute the data structures $\SA_R, \ISA_R, \PLCP_R, \LCP_R$ and the  RMQ-data structure for \PSV- and \NSV-queries on $\LCP_R$. 

    \item {\bf Phase 2 - Computing the \eCMS:} Compute the \eCMS\ of ${\cal C}$, as described in Sec.~\ref{sec:computing_cms}. 
    
    \item {\bf Phase 3 - Bucketing:} Identify the $S^*$-suffixes in ${\cal C}$ via a backward linear scan of ${\cal C}$. Bucket $S^*$-suffixes $i$ according to $ip(i)$, computed using the \eCMS\ (Lemma~\ref{lemma:eCMS}). 
    
    \item {\bf Phase 4 - Sorting the insert-heads:} 
    \begin{itemize}
        \item bucket the insert-heads according to their insert point;
        \item for each bucket $B$, partially sort $B$, according to Lemmas~\ref{lemma:sufcomp_1} and~\ref{lemma:sufcomp_2};
        \item rename insert-heads according to lexicographic rank of substring stretching up to the mismatch character (metacharacters are $S_d[j..j+\ell_{d,j}]$);
        \item generate new string $C$ as concatenation of these metacharacters; 
        \item compute the suffix array of $C$, map back to corresponding suffixes of ${\cal C}$.  
    \end{itemize}

    \item {\bf Phase 5 - Fully sorting the $S^*$-suffixes:} 
    for each bucket $B$ from Phase 3, sort $B$, according to Lemmas~\ref{lemma:sufcomp_2} and~\ref{lemma:sufcomp_3}

    \item {\bf Phase 6 - Inducing the \GSA:} With two scans, induce $L$-suffixes, induce $S$-suffixes. 

\end{itemize}
\hrule
\end{quote}

\medskip

We next give a worst-case asymptotic analysis of the algorithm. 

\begin{proposition}
Algorithm 1 computes the \GSA\ of a string collection ${\cal C}$ of total length $N$ in worst-case time $\Oh(N\log N)$. 
\end{proposition}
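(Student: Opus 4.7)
The plan is to walk phase by phase through Algorithm 1 and bound each of the six contributions by $\Oh(N\log N)$. Phases 1 and 6 reduce to textbook linear-time constructions ($\SA_R,\ISA_R,\LCP_R,\PLCP_R$, the RMQ index for \PSV/\NSV, and the standard two-pass induced sort over the $S^*$-sorted prefix), and both cost $\Oh(N)$ since $|R|\le N$.

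For Phase 2 I would reuse the standard amortization for matching statistics. Each step performs one left contraction, which costs $\Oh(1)$ via two $\ISA_R$ lookups followed by a \PSV/\NSV\ pair on $\LCP_R$, together with some number of right extensions, each a binary search on $\SA_R$ costing $\Oh(\log n)$. Because $\ell_{i+1}\ge \ell_i-1$, the number of successful right extensions when moving from position $i$ to position $i+1$ is at most $\ell_{i+1}-\ell_i+1$, plus at most one failing extension. Summing over all $N$ positions telescopes to $\Oh(N)$ right extensions in total, so Phase 2 runs in $\Oh(N\log N)$.

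Phase 3 marks the $S^*$-suffixes by a backward linear scan and then, for each of the $\Oh(N)$ $S^*$-suffixes, performs one $\pred_{H'}$ query to recover $ip(i)$ via Lemma~\ref{lemma:eCMS}, at cost $\Oh(N\log\log \chi')$; bucketing by the integer insert point is $\Oh(N)$ counting sort. In Phase 4 I would bucket the $\chi'\le N$ insert-heads by $ip$-value and sort each bucket by the triple $(\ell,x,c)$, which by Lemmas~\ref{lemma:sufcomp_1}--\ref{lemma:sufcomp_2} refines the lexicographic order whenever those fields alone decide it; this is $\Oh(\chi'\log \chi')$ in total. Assigning metacharacters by rank, writing down the metacharacter string $C$ of length $\chi'$, running a linear-time integer-alphabet suffix-array construction on $C$, and remapping the resulting ranks back to the corresponding insert-head positions are all $\Oh(\chi')$.

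The delicate phase is Phase 5. A naive application of Prop.~\ref{prop:suffix_sort} gives $\Oh(N\log N\log\log \chi')$, one $\log\log$ factor too many. My fix is to preprocess once: for every $S^*$-suffix $(d,i)$ precompute its \ems-tuple and the rank (in the insert-head ordering produced by Phase 4) of $\ihead(d,i+\ell_{d,i})$, at a total cost of $\Oh(N\log\log \chi')$ predecessor queries. Each subsequent suffix comparison inside a Phase-3 bucket then reduces to an $\Oh(1)$ inspection of the four-tuple followed, in the tied case, by an $\Oh(1)$ comparison of the two precomputed ranks (Lemma~\ref{lemma:sufcomp_3}), so each bucket is sorted in $\Oh(|B|\log|B|)$ comparisons and Phase 5 totals $\Oh(N\log N)$. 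Summing over all six phases yields the claimed bound; the only real obstacle is this $\log\log$ elimination in Phase 5, the rest being routine bucket and induced-sort arithmetic.
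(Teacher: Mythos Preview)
Your proof is correct and follows the same phase-by-phase decomposition as the paper. Two differences in the per-phase accounting are worth noting. First, the paper gets slightly sharper constants in Phases 3 and 4: it observes that during the backward scan the \eCMS\ is traversed in text order, so $\ihead(i)$ can be maintained in $\Oh(1)$ without any predecessor query, giving $\Oh(N)$ for Phase 3; and it argues that the partial sort within each Phase-4 bucket is a three-pass radix sort on $(\ell,x,c)$, giving $\Oh(\chi')$ rather than your $\Oh(\chi'\log\chi')$. Second, and conversely, your treatment of Phase 5 is more rigorous than the paper's: the paper simply asserts $\Oh(|B|\log|B|)$ per bucket without discussing the cost of a single comparison, whereas you correctly notice that a naive appeal to Prop.~\ref{prop:suffix_sort} would introduce an extra $\log\log\chi'$ factor and eliminate it by precomputing, for each $S^*$-suffix, its \ems-tuple together with the Phase-4 rank of $\ihead(d,i+\ell_{d,i})$. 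Both routes arrive at the same $\Oh(N\log N)$ bound, with Phase 5 dominating.
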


\begin{proof}
Let $|R|=n$. Phase 1 takes $\Oh(n+N)$ time, since constructing all data structures on $R$ can be done in linear time in $n$ and scanning the collection ${\cal C}$ takes time $\Oh(N)$. Phase 2 takes time $\Oh(N\log n)$ using the algorithm from Sec.~\ref{sec:computing_cms}. In Phase 3, identifying the $S^*$ suffixes, takes time $\Oh(N)$. Since at this point, the $\eCMS$ is in text-order, identifying $\ihead(i)$ takes constant time, also computing the insert-point takes constant time, so altogether $\Oh(N)$ time. In Phase 4, all steps are linear in $\chi'$, the number of insert-heads, including the partial sort of the buckets, since this can be done with radix-sort (three passes over each bucket), so this phase takes time $\Oh(\chi')$. Phase 5 takes time $\Oh(|B|\log|B|)$ for each bucket $B$, thus $\Oh(N\log |B_{\max}|)$ for the entire collection, where $B_{\max}$ is a largest bucket. Since all strings in the collection are assumed to be highly similar to the reference, the size of the buckets can be expected to vary around the number of strings in the collection $m$; however, in the worst case the largest bucket can be $\Theta(N)$. Finally, Phase 6 takes linear time $\Oh(N)$. Altogether, the running time is dominated by Phase 5, $\Oh(N\log N)$. 
\end{proof}

%%%%%%%%%%%%%%%%%%%%%%%%%%%%%%%%%%%%%%%%%%%%%%%%%%%%%%%%%%%%%%%%%%%%%%%%%%%%%%%%%%%%%%%
%% SECTION 
%%%%%%%%%%%%%%%%%%%%%%%%%%%%%%%%%%%%%%%%%%%%%%%%%%%%%%%%%%%%%%%%%%%%%%%%%%%%%%%%%%%%%%%

\section{Implementation details}\label{sec:implementation}

In Phase 1, the augmentation step involves, for every character $c$ not occurring in $R$ but occurring in ${\cal C}$, appending $c^{n_c}$ to $R$, where $n_c$ is the length of the longest run of $c$ in ${\cal C}$. This avoids having $0$-length entries in the matching statistics and is necessary in order to have a well defined $ip$.

To compute $\SA_R$ in Phase 1, we use {\tt sais}~\cite{sais-lite} as implemented by Yuta Mori, a well engineered version of SAIS \cite{NZC11}, which was chosen due to its consistent speed on many different inputs.
For the computation of $\PLCP_R$ and $\LCP_R$ we use the $\Phi$ method~\cite{KMP09}. This is the fastest method to compute the LCP array we know of.
We  constructed  the  data structure  of  C{\'a}novas  and  Navarro~\cite{CN10} for NSV/PSV queries on the LCP array, as it has low space overheads and was fast to query and initialize.

For the predecessor data structure, we use the following two-layered approach in practice (rather than~\cite{W83}). We sample every $b$th head starting position and store these in an array. In a separate array we store a differential encoding of all head positions. The array of differentially encoded starting positions takes
32 bits per entry. Predecessor search for a position $x$ proceeds by first binary searching in the sampled array to
find the predecessor sample at index $i$ of that array. We then access the differentially encoded array starting at
index $ib$ and scan, summing values until the cumulative sum is greater than $x$, at which point we know the predecessor. This takes $O(\chi'/b + b)$ time, where $\chi'$ is the number of insert-heads.

For Phase 4, when we have to sort $C$ (the concatenation of metacharacters representing partially sorted heads), we use a SACA-K implementation that handles integer alphabets~\cite{LouzaGT17}. This choice was made because of this algorithm's low space requirement, in particular, $\Oh(K)$, where $K$ is the number of distinct $\ems$-entries of insert-heads in $H'$ (note $K = O(\chi')$).

%%%%%%%%%%%%%%%%%%%%%%%%%%%%%%%%%%%%%%%%%%%%%%%%%%%%%%%%%%%%%%%%%%%%%%%%%%%%%%%%%%%%%%%
%% SECTION 
%%%%%%%%%%%%%%%%%%%%%%%%%%%%%%%%%%%%%%%%%%%%%%%%%%%%%%%%%%%%%%%%%%%%%%%%%%%%%%%%%%%%%%%

\section{Experiments}\label{sec:experiments}

We implemented our algorithm for computing the generalized suffix array in C++. Our prototype implementation, \sacamats, is available at \url{https://github.com/fmasillo/sacamats}. The experiments were conducted on a laptop equipped with 16GB of RAM DDR4-2400MHz and an Intel(R) Core(R) i5-8250U@3.4GHz with 6MB of cache. The operating system is Ubuntu 20.04 LTS, the compiler used is {\tt g++} version 9.4.0 with options {\tt -std=c++17 -O3 -funroll-loops} enabled. 

In the following experiments, we compare $\sacamats$ to two well known suffix array construction tools, both implementations by Yuta Mori~\cite{sais-lite, divsufsort}. The first, \sais, is an implemenation of the well-known SAIS algorithm by Nong et al.~\cite{NZC11}; the second, $\divsufsort$~\cite{0001K17}, is perhaps the most widely used tool for suffix array construction. We also compare against $\gsufsort$~\cite{LouzaTGPR20}, which is an extension of the SACA-K algorithm~\cite{Nong13} to a collection of strings, and to $\bigBWT$ \cite{BoucherGKLMM19}, a tool computing the BWT and the suffix array, designed specifically for highly repetitive data. 

\subsection{Datasets}
For our tests, we used two publicly available datasets, one consisting of copies of human chromosome 19 from the 1000 Genomes Project~\cite{1000genomes}, and another of copies of SARS-CoV2 genomes taken from NCBI Datasets\footnote{\url{https://www.ncbi.nlm.nih.gov/datasets/coronavirus/genomes/}}. For both datasets we selected subsets of different sizes in order to study the scalability of our algorithm. The sizes are 250MB, 500MB, 800MB and 1GB. More information can be found in Table~\ref{tab:datasets}.

We observe that on both datasets the number of i-heads is around 100x less than the input size, and on {\tt chr19} it is 8x less than the number of BWT runs.

\begin{table}[th]
    \centering
    \begin{tabular}{|l|l|r|r|r|r|}
         \hline
         Name & Description & $\sigma$ & $r$ & no.\ of $S^*$-suffixes & no.\ of i-heads \\
         \hline
         {\tt chr19} & Human Chromosome 19 & 5 & 32\,018\,267 & 129\,129\,636 & 4\,220\,033 \\
         {\tt sars-cov2} & SARS-CoV2 genome & 14 & 351\,596 & 143\,588\,463 & 6\,537\,294 \\
         \hline
    \end{tabular}
    \caption{Datasets used in experiments. In column 3, we specify the alphabet size $\sigma$, in column 4 the number $r$ of runs of the BWT, in column 5 the number of $S^*$-suffixes, and column 6 the number of insert-heads. In our experiments we use prefixes of each dataset up to 1GB. The last three columns refer to the 500MB prefix. }
    \label{tab:datasets}
\end{table}

\subsection{Results}

In Figures \ref{fig:chr19} and \ref{fig:covid}, information about running time for both datasets is displayed. The line plot represents a direct comparison of different algorithms, whereas the stacked bar plot is to visualize how much each phase of $\sacamats$ takes w.r.t. the total running time (cp.~Sec.~\ref{sec:putting}). 

These tools all produce slightly different outputs: $\sais$ and $\divsufsort$ output the $\SA$, $\gsufsort$ and $\sacamats$ the $\GSA$, and $\bigBWT$ both the $\BWT$ and the $\SA$. Because of these differences, if one were to write to disk each result, the running time would be affected accordingly by the size of the output. Therefore, we only compare the building time, i.e.\ the time spent constructing the SA and storing it in a single array in memory, without the time spent writing it to disk. For this reason, we made slight changes to the $\bigBWT$ code to enable storing the $\SA$ in main memory.  

By looking at the line plots, one can see that $\sacamats$ is competitive in both scenarios, i.e., it is faster than all tools on {\tt sars-cov2},  except $\bigBWT$. The same is true for {\tt chr19}, where it is the fastest method, especially on larger inputs, but here the main competitor becomes $\divsufsort$. More precisely, for the first dataset ({\tt chr19}) and considering 1GB of data, $\sacamats$ takes less than a third of the time of $\gsufsort$, is 20\% faster than $\sais$, 12\% faster than $\bigBWT$, and 5\% faster than $\divsufsort$. For the second dataset ({\tt covid}), $\sacamats$ takes again less than a third of the time of $\gsufsort$, is 37\% faster than $\divsufsort$, 16\% faster than $\sais$, and 30\% slower than $\bigBWT$.

Shifting our attention to the stacked bar plots, Figure~\ref{fig:chr19-phases} indicates that a lot of time is spent in the first phase, consisting in the augmentation of $R$ and the construction of various data structures for the augmented version of $R$. In the setting of DNA strings it is not too hard to think that the augmentation process will not elongate $R$, due to the very restricted alphabet. If the application lends itself to it, one could compute beforehand all the data structures listed in Phase 1, gaining roughly 20 seconds of run time. In our experiment on {\tt chr19} we would then be clearly the best algorithm, further distancing from the others. Alternatively, the common method of replacing {\tt N} symbols with random nucleotide symbols would be another way to speed up this phase.

Finally, we comment on memory usage, which is highest for $\sacamats$ and $\gsufsort$ at 8 bytes per input symbol, and 4 bytes per input symbol for $\divsufsort$ and $\sais$, and $\bigBWT$ (including the 4 bytes per input symbol of the $\SA$ when it is saved in memory, see above). We have not yet optimized for memory usage and note that a semi-external implementation of our approach, in which buckets reside on disk, presents itself as an effective way to reduce main memory usage. In all phases, the actual working set---the amount of data active in main memory---is small (for the most part, proportional to the number of i-heads), and other authors have shown that the inducing phase is amenable to external memory, too~\cite{KKPZ17}. We leave these optimizations as future work.

\begin{figure}[hb]
\centering
\begin{subfigure}{.5\textwidth}
    \centering
	\includegraphics[width=\textwidth]{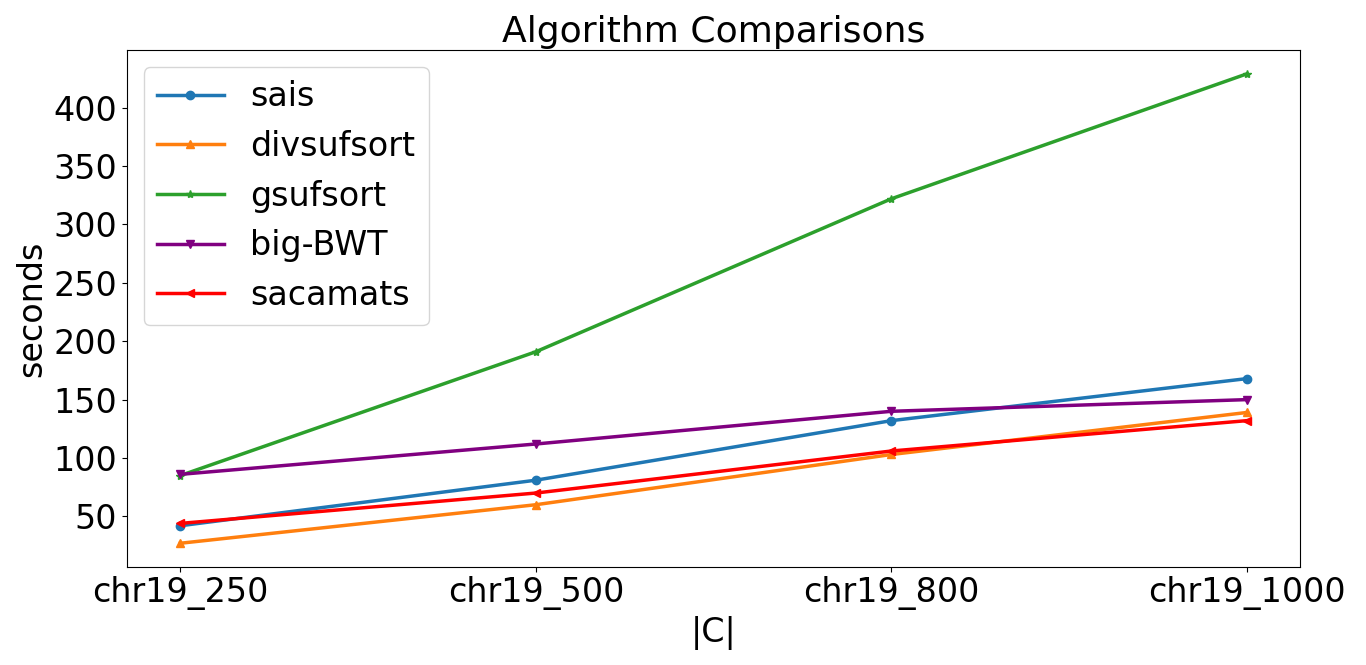}
	\caption{Running time comparison.}
\end{subfigure}%
\begin{subfigure}{.5\textwidth}
    \centering
	\includegraphics[width=\textwidth]{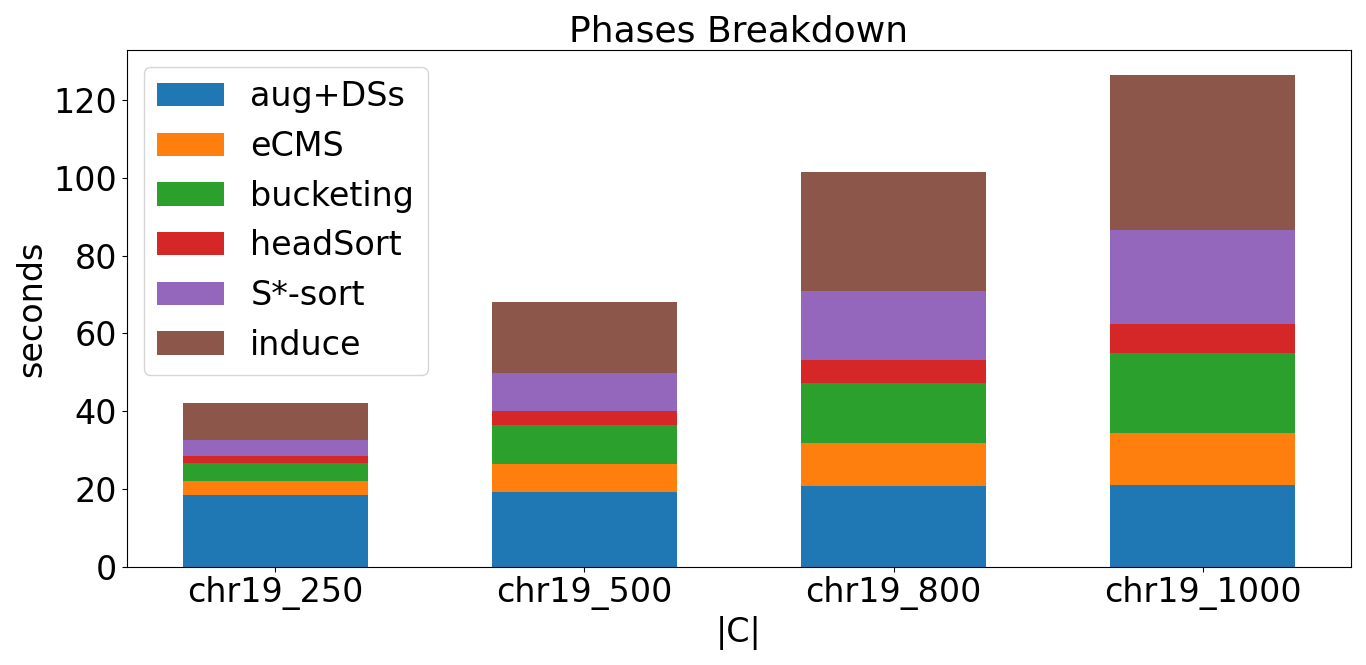}
	\caption{Phases breakdown, see Sec.~\ref{sec:putting} for details.}
	\label{fig:chr19-phases}
\end{subfigure}
\caption{Experiments on different subsets of copies of Human Chromosome 19.}
\label{fig:chr19}
\end{figure}

\begin{figure}[ht]
\centering
\begin{subfigure}{.5\textwidth}
    \centering
	\includegraphics[width=\textwidth]{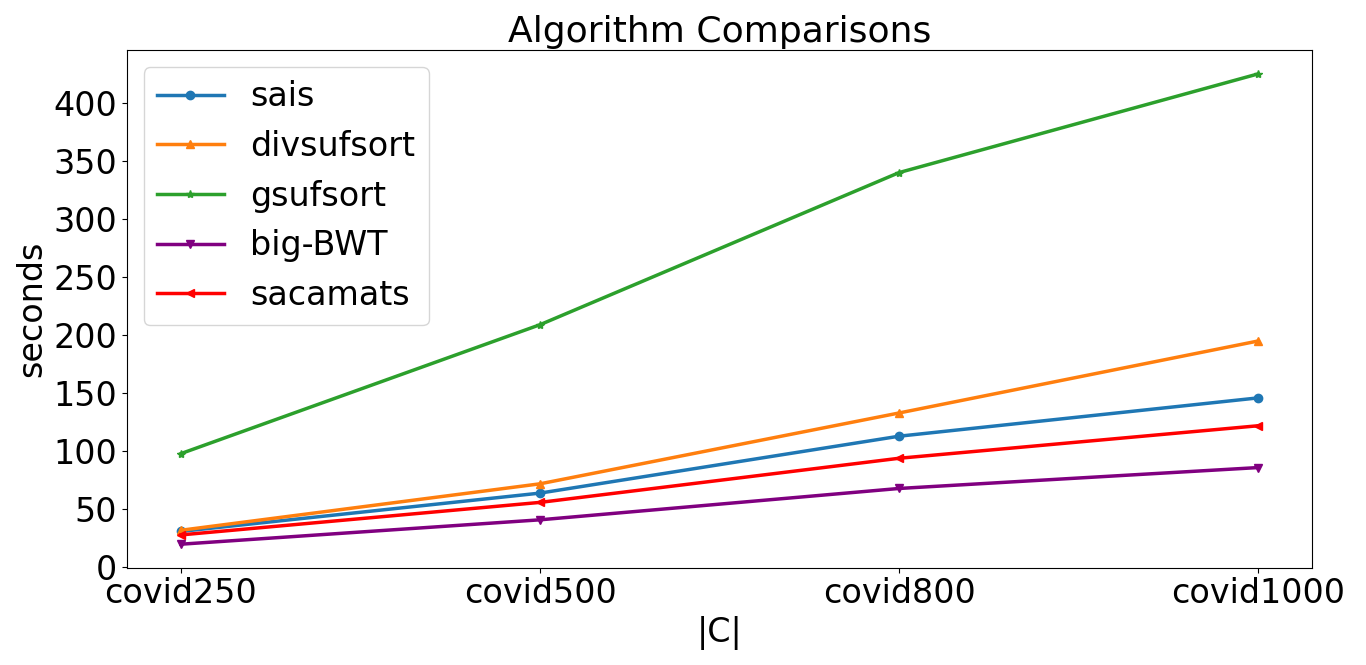}
	\caption{Running time comparison.}
\end{subfigure}%
\begin{subfigure}{.5\textwidth}
    \centering
	\includegraphics[width=\textwidth]{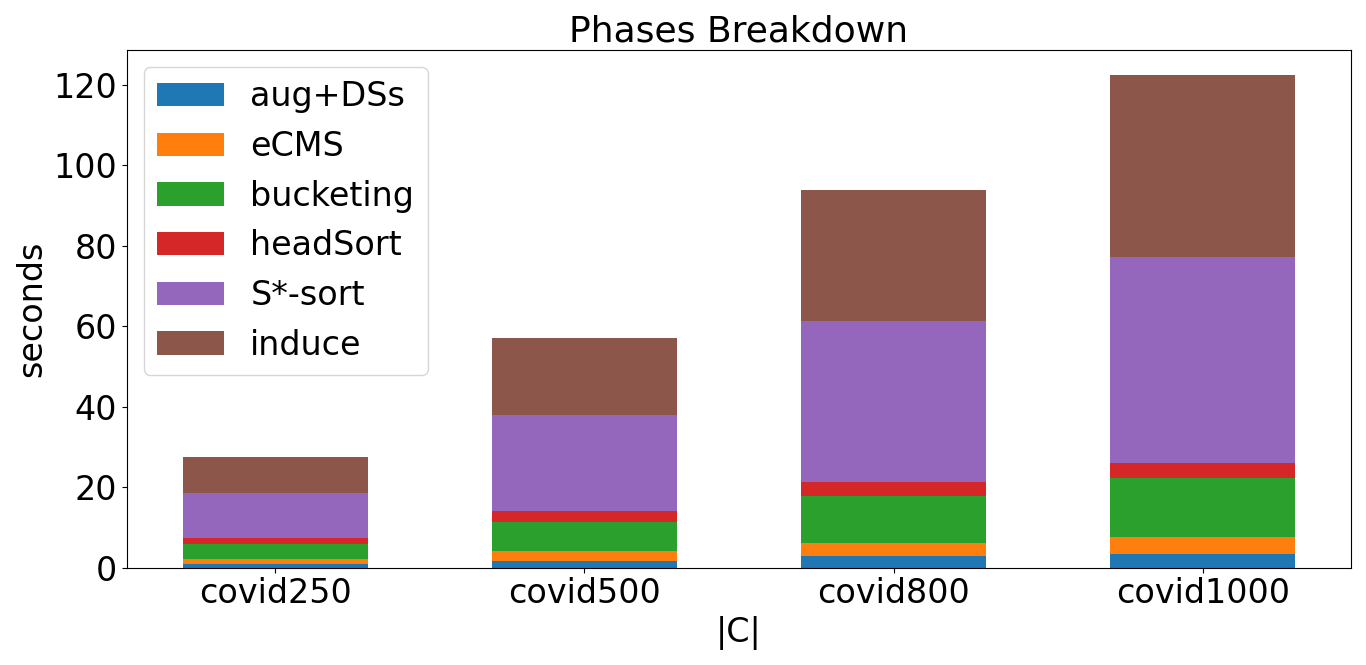}
	\caption{Phases breakdown.}
\end{subfigure}
\caption{Experiments on different subsets of SARS-CoV2 genomes.}
\label{fig:covid}
\end{figure}

%%%%%%%%%%%%%%%%%%%%%%%%%%%%%%%%%%%%%%%%%%%%%%%%%%%%%%%%%%%%%%%%%%%%%%%%%%%%%%%%%%%%%%%
%% SECTION 
%%%%%%%%%%%%%%%%%%%%%%%%%%%%%%%%%%%%%%%%%%%%%%%%%%%%%%%%%%%%%%%%%%%%%%%%%%%%%%%%%%%%%%%

\section{Conclusion}\label{sec:conclusion}

We have presented a new algorithm for computing the generalized suffix array of a collection of highly similar strings. It is based on a compressed representation of the matching statistics, and on efficient handling of string comparisons. Our experiments show that a relatively straightforward implementation of the new algorithm is competitive with the fastest existing suffix array construction algorithms on datasets of highly similar strings, as are common in computational biology applications. 

A byproduct of our suffix sorting algorithm is a heuristic for fast computation of the matching statistics of a collection of highly similar genomes w.r.t. a reference sequence, which is of independent interest. We also envisage uses for our compressed matching statistics (CMS) data structure beyond the present paper, for example as a tool for sparse suffix sorting, or for distributed suffix sorting in which the CMS is distributed to all sorting nodes together with a lexicographic range of the suffixes that each particular node is responsible for sorting. From the CMS alone, each node can extract the positions of its suffixes and then sort them with the aid of the CMS.

We believe there to be a great deal of room for further practical improvements, both through algorithm engineering and parallelism. Interestingly, in an initial attempt along the second line, simply assigning $S^*$-suffix buckets to one of four different sorting threads reduces runtime significantly, for example, from 122 to 89 seconds on the 1GB {\tt sars-cov2} dataset. 

Further studies will be conducted on how the size of \eCMS\ impacts on the competitiveness of our tool. 

\newpage

\end{document}